\theoremstyle{plain}
\newtheorem{theorem}{Theorem}
\newtheorem*{assumption}{Assumption}
\newtheorem{definition}{Definition}
\newtheorem{remark}{Remark}
\def\Qop{\text{\rm\textQoppa}}
\def\bfa{{\bf a}}
\def\bfb{{\bf b}}
\def\bfc{{\bf c}}
\def\bfA{{\bf A}}
\def\bfB{{\bf B}}
\def\bfC{{\bf C}}
\def\usemedskip{}
\begin{document}
\title{\bf A formalism for the calculus of variations with spinors }

\author[,1]{Thomas B\"ackdahl \footnote{E-mail address:{\tt t.backdahl@ed.ac.uk}}}
\author[,2]{Juan A. Valiente Kroon \footnote{E-mail address:{\tt j.a.valiente-kroon@qmul.ac.uk}}}

\affil[1]{The School of Mathematics, University of Edinburgh, JCMB 6228, 
Peter Guthrie Tait Road, Edinburgh EH9 3FD, United Kingdom}

\affil[2]{School of Mathematical Sciences, Queen Mary, University of London,
Mile End Road, London E1 4NS, United Kingdom}

\maketitle

\begin{abstract}
We develop a frame and dyad gauge-independent formalism for the
calculus of variations of functionals involving spinorial objects. As
part of this formalism we define a modified variation operator which
absorbs frame and spin dyad gauge terms. This formalism is applicable
to both the standard spacetime (i.e. $SL(2,\mathbb{C})$) 2-spinors as
well as to space (i.e. $SU(2,\mathbb{C})$) 2-spinors. We compute
expressions for the variations of the connection and the curvature spinors.
\end{abstract}

\section{Introduction}

Variational ideas play an important role in various areas of
mathematical General Relativity ---e.g. in the ADM formalism \cite{ArnDesMis62}, in the
analysis of Penrose-like inequalities \cite{Mar09} or in the analysis of area-angular
momentum inequalities \cite{Dai12} to mention some. Similarly, spinorial methods
constitute a powerful tool for the analysis and manipulation of the Einstein field
equations and their solutions ---most notably the proof of the
positivity of the mass by Witten \cite{Wit81} and the analysis of linearised
gravity, see e.g. \cite{PenRin84}. 

To the best of our knowledge, all available treatments of calculus of
variations and linerisations in spinorial settings make use of
computations in terms of components with respect to a dyad. It is
therefore of interest to have a setup for performing a
dyad-independent calculus of variations and computation of linearisations with spinors. The purpose of
the present article is to develop such a setup. We expect this
formalism to be of great value in both the analysis of
the notion of non-Kerrness introduced in \cite{BaeVal10a,BaeVal10b}
and positivity of the mass in \cite{BaeVal11a}, as well as in a covariant
analysis of linearised gravity.

The transformation properties of tensors and spinors pose some
conceptual subtleties which have to be taken into account when
computing variations of the basic tensorial and spinorial structures. It
is possible to have variations of of these structures which are \emph{pure
gauge}. This difficulty is usually dealt with by a careful fixing of the gauge
in some geometrically convenient manner. One thus makes calculus of variations in a specific gauge
and has to be careful in distinguishing between properties which are specific
to the particular gauge and those which are generic. This situation
becomes even more complicated as, in principle, both the tensorial and
spinorial structures are allowed to vary simultaneously.

\usemedskip
In this article it is shown that it is possible to define a
\emph{modified variation operator} which absorbs gauge terms in the
variation of spinorial fields and thus, allows to perform
\emph{covariant variations}. The idea behind this modified variation
operator is similar to that behind the derivative operators in
the GHP formalism which absorb terms associated to the freedom in a NP tetrad
---see \cite{GerHelPen73}. As a result of our analysis we are able to
obtain expressions involving abstract tensors and spinors ---thus,
they are valid in any system of coordinates, and therefore invariant under diffeomorphisms 
which are constant with respect to variations. However, linearisations of 
diffeomorphisms do affect our variational quantities. This is discussed in 
Section~\ref{sec:diffeomorphisms}, where we also find that the diffeomorphism 
freedom can be controled by a gauge source function.

\usemedskip
Finally, we point out that although our primary concern in this
article is the construction of a formalism for the calculus of
variations of expressions involving spinors in a 4-dimensional
Lorentzian manifold, the methods can be adapted to a space-spinor formalism on
3-dimensional Riemannian manifolds. This is briefly discussed in Section~\ref{sec:spacespinors}.

\usemedskip
The calculations in this article have been carried out in the Mathematica based symbolic differential
geometry suite \emph{xAct} \cite{xAct}, in particular \emph{SymManipulator} \cite{Bae11a} developed by TB.

\subsection*{Notation and conventions}
All throughout, we use abstract index notation to denote
tensors and spinors. In particular, the indices $a, b, c,\ldots$ and $i, j, k, \ldots$ are
abstract spacetime and spatial tensor indices respectively, while $A, B, C,\ldots$ denote 
abstract spinorial indices. 
The boldface indices $\bfa, \bfb, \bfc,\ldots$ and $\bfA, \bfB, \bfC,\ldots$ will be used as tensor
frame indices and spinor frame indices, respectively. We follow the 
tensorial and spinorial conventions of Penrose \& Rindler
\cite{PenRin84}. 

\usemedskip
Our signature convention for 4-dimensional Lorentzian metrics is
$(+,-,-,-)$, and 3-dimen\-sional Riemannian metrics have
signature $(-,-,-)$.

\usemedskip
The standard positions for the basic variations are $\delta g_{ab}$, $\delta
\sigma_{a}{}^{AA'}$, $\delta
\sigma_{k}{}^{AB}$, $\delta \omega^{\bf a}{}_{b}$ , $\delta
\epsilon^{\bf A}{}_{B}$, $\delta \epsilon_{AB}$, $\delta\gamma_a{}^b{}_c$. If any other index
positions appear, this means that the indices are moved up or down
with $g_{ab}$ or $\epsilon_{AB}$ after the variation. The definitions
of the above objects will be given in the main text.

\section{Basic setup}
In this section we discuss our basic geometric setup, which will be used in Section~\ref{Section:SLCalculus} to perform calculus of variations. 

\subsection{Families of metrics}
In what follows, let $(\mathcal{M},\mathring{g}_{ab})$ denote a
4-dimensional Lorentzian manifold (\emph{spacetime}). The metric
$\mathring{g}_{ab}$ will be known as the \emph{background metric}. In
what follows, in addition to $\mathring{g}_{ab}$, we consider \emph{arbitrary} families of
Lorentzian metrics $\{g_{ab}[\lambda]\}$ over $\mathcal{M}$ with
$\lambda\in\mathbb{R}$ a parameter such that
$g_{ab}[0]=\mathring{g}_{ab}$. Intuitively, a particular choice of
family of metrics can be thought of as a curve in the moduli space of
Lorentzian metrics over $\mathcal{M}$. The fact that we allow for
arbitrary families of metrics enables us to probe all possible directions of
this space in a neighbourhood of $\mathring{g}_{ab}$ and thus, 
we can compute \emph{Fr\'echet} derivatives of functionals
depending on the metric ---see Section~\ref{Section:CVBasics}. 

In order to make possible the discussion of spinors, it will be assumed
that the spacetimes $(\mathcal{M},g_{ab}[\lambda])$ for fixed $\lambda$ are
orientable and time orientable and admit a spinorial structure.

\medskip
\noindent
\textbf{Notational warning.} In what follows, for the ease of the presentation,
we often suppress the dependence on $\lambda$ from the various
objects. Thus, unless otherwise stated, all objects not
tagged with a \emph{ring} $(\mathring{\phantom{X}})$ are
assumed to depend on a parameter $\lambda$. 

\subsection{Frames}
In what follows, we assume that associated to each family of metrics
$\{ g_{ab}\}$ one has a family $\{ e_{\bf a }{}^a\}$ of
$g_{ab}$-orthonormal frames.  Let $\{ \omega^{\bf
  a}{}_a \}$ denote the family of associated cobases so that for fixed
$\lambda$ one has  $e_{\bf a}{}^a
\omega^{\bf b}{}_a =\delta_{\bf a}{}^{\bf b}$. Following the conventions of
the previous section, we write $\mathring{e}{}_{\bf a}{}^a\equiv 
e_{\bf a}{}^a[0]$ and $\mathring{\omega}{}^{\bf a}{}_a \equiv 
\omega^{\bf a}{}_a[0]$. By assumption, one has that
\begin{equation}
g_{ab} e_\bfa{}^a e_\bfb{}^b =\eta_{\bfa\bfb}, \qquad g_{ab} =
\eta_{\bfa\bfb} \omega^\bfa{}_a \omega^\bfb{}_b.
\label{OrthonormalityMetricFrame}
\end{equation}
where, as usual, $\eta_{\bfa\bfb} =\mbox{diag}(1,-1,-1,-1)$. 

\begin{remark}
Observe that in view of the relations
\eqref{OrthonormalityMetricFrame} any family of frames and coframes
$\{e'_\bfa{}^a\}$ and $\{ \omega'^{\bfa}{}_a \}$ related to
$\{e_\bfa{}^a\}$ and $\{ \omega^{\bfa}{}_a \}$ through a family of
Lorentz transformations $\{\Lambda^\bfa{}_\bfb\}$ give rise to the the
same family of metrics $\{ g_{ab} \}$ ---see Appendix \ref{Section:LorentzTransformations}. 
\end{remark}

 \subsection{Spinors}
By assumption, the spacetimes $(\mathcal{M},g_{ab})$ are endowed with
a spinorial structure. Accordingly, we consider families of
antisymmetric spinors $\{ \epsilon_{AB}\}$ such that for fixed
$\lambda$ the spinor $\epsilon_{AB}$ gives rise to the spinor
structure of $(\mathcal{M},g_{ab})$. Moreover, we set
$\mathring{\epsilon}_{AB} \equiv \epsilon_{AB}[0]$. 

\usemedskip
Associated to the family $\{ \epsilon_{AB}\}$ one considers a family
$\{ \epsilon_{\bf A}{}^A \}$ of normalised spin dyads 
---that is, one has that 
\begin{equation}
\epsilon_{AB}  \epsilon_{\bf A}{}^A \epsilon_{\bf B}{}^B
=\epsilon_{\bf AB}, \qquad 
\epsilon_{\bf AB} \equiv \left(  
\begin{array}{cc}
0 & 1 \\
-1 & 0
\end{array}
\right). \label{Definition:EpsilonSpinor}
\end{equation}
Let $\{ \epsilon^{\bfA}{}_A \}$ denote the family of dual \emph{covariant bases} 
for which the relation $\epsilon^{AB} \epsilon^{\bfA}{}_A 
\epsilon^{\bf B}{}_B=\epsilon^{\bf AB}$ 
with $(\epsilon^{\bf AB}) \equiv -(\epsilon_{\bf AB})^{-1}$ holds. 
It follows that one has
\[
\delta_A{}^B= \epsilon_A{}^{\bf A} \epsilon_{\bf A}{}^B, \qquad
\epsilon_{AB} = \epsilon_{\bf AB} \epsilon_A{}^{\bf A}
\epsilon_B{}^{\bf B}, \qquad \epsilon^{AB} = \epsilon^{\bf
  AB}\epsilon_{\bf A}{}^A \epsilon_{\bf B}{}^B.
\]

\begin{remark}
As in the case of tensor frames, any family of 
dyads $\{\epsilon'_{\bf A}{}^A \}$ related to $\{ \epsilon_{\bf A}{}^A
\}$ through a family of Lorentz transformations $\{\Lambda^\bfA{}_\bfB\}$ gives rise
to the same spinorial structures associated to the family of
antisymmetric spinors $\{ \epsilon_{AB} \}$ ---see Appendix \ref{Section:LorentzTransformations}. 
\end{remark}

\subsection{Infeld-van der Waerden and soldering forms}
The well-known correspondence between tensors and spinors is realised by the 
\emph{Infeld-van der Waerden symbols} $\sigma_{\bf a}{}^{\bf AA'}$ and 
$\sigma^{\bf a}{}_{\bf AA'}$. Given an arbitrary $ v^a \in T \mathcal{M}$
and $\beta_a \in T^* \mathcal{M}$ one has that
\[
v^{\bf a} \mapsto v^{\bf AA'} = v^{\bf a} \sigma_{\bf a}{}^{\bf AA'},
\qquad \beta_{\bf a} \mapsto \beta_{\bf AA'} =\beta_{\bf a}\sigma^{\bf a}{}_{\bf AA'} 
\]
where for \emph{fixed} $\lambda$ 
\[
v^{\bf a} \equiv v^a \omega^{\bf a}{}_a, \qquad \beta_{\bf a} \equiv
\beta_a e_{\bf a}{}^a, 
\]
denote the components of $v^a$ and $\beta_a$ with respect to the
orthonormal basis $e_\bfa{}^a[\lambda]$ of 
$(\mathcal{M},g_{ab}[\lambda])$. In more explicit terms, the
correspondence can be written as
\[
(v^{\bf 0}, v^{\bf 1}, v^{\bf 2}, v^{\bf 3}) \mapsto
\frac{1}{\sqrt{2}}
\left(
\begin{array}{cc}
v^{\bf 0} + v^{\bf 3} & v^{\bf 1} + \mbox{i} v^{\bf 2} \\
v^{\bf 1} - \mbox{i} v^{\bf 2} & v^{\bf 0} - v^{\bf 3}
\end{array}
\right), \;
(\beta_{\bf 0}, \beta_{\bf 1}, \beta_{\bf 2}, \beta_{\bf 3})
\mapsto \frac{1}{\sqrt{2}}
\left(
\begin{array}{cc}
\beta_{\bf 0} + \beta_{\bf 3} & \beta_{\bf 1} - \mbox{i}\beta_{\bf 2}
\\
\beta_{\bf 1} + \mbox{i}\beta_{\bf 2} & \beta_{\bf 0} - \beta_{\bf 3}
\end{array}
\right).
\]

From the \emph{Infeld-van der Waerden symbols} we define the 
\emph{soldering form} $\sigma_a{}^{AA'}$ and the dual of the soldering form $\sigma^a{}_{AA'}$ by
\begin{subequations}
\begin{align}
\sigma_a{}^{AA'} \equiv \epsilon_{\bf A}{}^A \bar{\epsilon}_{\bf
  A'}{}^{A'} \omega^{\bf  a}{}_a\sigma_{\bf a}{}^{\bf AA'},\label{Definition:SolderingForm}\\
\sigma^a{}_{AA'} \equiv \epsilon^{\bf A}{}_A \bar{\epsilon}^{\bf
  A'}{}_{A'} e_{\bf a}{}^a\sigma^{\bf a}{}_{\bf AA'} .\label{Definition:DualSolderingForm}
\end{align}
\end{subequations}

By direct calculation, we can then verify the relations
\begin{subequations}
\begin{align}
g_{ab} ={}&\epsilon_{AB} \bar\epsilon_{A'B'} \sigma_a{}^{AA'} \sigma_b{}^{BB'},\label{eq:gepsilonrel}\\
\delta_{a}{}^{b} ={}& \sigma_a{}^{BB'} \sigma^b{}_{BB'}. \label{eq:dualsolderingform}
\end{align}
\end{subequations}
It is important to note that $\sigma_a{}^{AA'}$ and $\sigma^a{}_{AA'}$ are tensor frame and spin dyad dependent, while the relations \eqref{eq:gepsilonrel} and \eqref{eq:dualsolderingform} are universal.

Following our approach, in the sequel we consider families
$\{\sigma_a{}^{AA'}\}$ and $\{ \sigma^a{}_{AA'} \}$ of soldering forms
such that $\mathring{\sigma}_a{}^{AA'}\equiv \sigma_a{}^{AA'}[0]$ and
$\mathring{\sigma}^a{}_{AA'}\equiv \sigma^a{}_{AA'}[0]$ are the
soldering forms associated to
$(\mathring{\omega}^{\bf b}{}_{a},\mathring{\epsilon}^{\bf B}{}_{A})$.

\begin{remark}
In this article we adopt the point of view that the
metric structure provided by $g_{ab}$ and the spinorial structure
given by $\epsilon_{AB}$ are independent from each other. After
a  choice of frame and spinor basis these structures are linked to
each other ---in an, admittedly, arbitrary manner--- through the relations in
\eqref{Definition:SolderingForm} and \eqref{eq:gepsilonrel}. 
\end{remark}

\section{Calculus of variations}
\label{Section:SLCalculus}

\subsection{Basic formalism}
\label{Section:CVBasics}

The main objective of our calculus of variations is to describe how real
valued functionals depend on their arguments  ---in particular, in the
case the arguments are covariant spinors. To motivate our analysis, we first consider
a real valued functional $\mathcal{F}[\omega^{\bf a}{}_a, \xi^{\bf
a}]$, where $\xi^{a}$ is a vector field and $\xi^{\bf a}=\omega^{\bf
a}{}_a\xi^a$.  Given a \emph{particular} family of fields $\{\omega^{\bf a}{}_a[\lambda],\, \xi^{\bf
a}[\lambda] \}$ depending on a parameter $\lambda$, we define the variations
$\{ \delta \omega^{\bf a}{}_a,\,  \delta \xi^{\bf a} \}$ through the expressions
\[
\delta \omega^{\bf a}{}_a \equiv \frac{\mbox{d}\omega^{\bf a}{}_a}{\mbox{d}\lambda}
\bigg|_{\lambda=0}, \qquad \delta \xi^{\bf a}
\equiv \frac{\mbox{d}\xi^{\bf a}}{\mbox{d}\lambda}  \bigg|_{\lambda=0}.
\]
In terms of the above fields over $\mathcal{M}$ we define \emph{the G\^ateux
derivative of $\mathcal{F}[\omega^{\bf a}{}_a, \xi^{\bf a}]$ at $\{\mathring{\omega}^{\bf a}{}_a,\, \mathring{\xi}^{\bf
a} \}$  in the direction of the family} $\{\omega^{\bf a}{}_a[\lambda],\, \xi^{\bf
a}[\lambda] \}$ as 
\begin{align*}
\delta_{\{\omega^{\bf a}{}_a,\, \xi^{\bf
a} \}} \mathcal{F}[\mathring\omega^{\bf a}{}_a, \mathring\xi^{\bf a}] 
\equiv{}& \frac{\mbox{d}}{\mbox{d}\lambda}
\mathcal{F}[\omega^{\bf a}{}_a[\lambda], \xi^{\bf a}[\lambda]] \bigg|_{\lambda=0}\\
={}& \frac{\mbox{d}}{\mbox{d}\lambda}
\mathcal{F}[\mathring{\omega}^{\bf a}{}_a + \lambda \delta \omega^{\bf a}{}_a, \mathring{\xi}^{\bf a}
+ \lambda  \delta \xi^{\bf a}] \bigg|_{\lambda=0}.
\end{align*}
Now, if $\delta_{\{\omega^{\bf a}{}_a,\, \xi^{\bf
a} \}} \mathcal{F}$ exists for
\emph{any} choice of family $\{\omega^{\bf a}{}_a,\, \xi^{\bf a} \}$ one then
says that $\mathcal{F}[\omega^{\bf a}{}_a, \xi^{\bf
a}]$ is \emph{Fr\'echet differentiable} at $\{\mathring{\omega}^{\bf a}{}_a,\, \mathring{\xi}^{\bf
a} \}$. If this is the case, there exists a functional $\delta
\mathcal{F}$, the \emph{Fr\'echet
  derivative}, from which 
$\delta_{\{\omega^{\bf a}{}_a,\, \xi^{\bf a} \}} \mathcal{F}$ can be computed if a particular choice of
the family of the variations $\{\delta \omega^{\bf a}{}_a,\, \delta \xi^{\bf
a} \}$ is considered. For more details concerning the notions of G\^ateaux and Fr\'echet
derivative and their relation see \cite{Tro83}.

The functional $\mathcal{F}[\omega^{\bf a}{}_a, \xi^{\bf a}]$
considered in the previous paragraph depends on the
coframe and components of a tensor fields in terms of this basis.  As
the particular choice of frame involves the specification of a gauge,
instead of regarding the functional $\delta \mathcal{F}$ as depending on the fields $[\mathring{\omega}^{\bf
a}{}_a,\mathring{\xi}^{\bf a},\delta\omega^{\bf a}{}_a, \delta\xi^{\bf
a}]$ it will be convenient to regard it as depending on 
$[\mathring{g}_{ab}, \mathring{\xi}^a, \delta
g_{ab},T_{ab},\delta\xi^{a}]$, where the field $T_{ab}$ describes the
frame gauge choice and
\[
\delta g_{ab} \equiv \frac{\mbox{d}g_{ab}}{\mbox{d}\lambda} \bigg|_{\lambda=0},
\]
where $\{ g_{ab} \}$ is a family of metrics over $\mathcal{M}$ such
that for fixed $\lambda$ the coframe $\mathring{\omega}^{\bf
a}{}_a$ is $g_{ab}$-orthonormal.

\usemedskip
Next, we consider real valued functionals depending on spinors. For
concreteness consider the a functional of the form $\mathcal{F}[g_{ab}, \epsilon^{\bf A}{}_A, \kappa_{\bf
A}]$. The G\^ateaux and Fr\'echet derivatives of this functional are
defined in the natural way by considering arbitrary families of fields
$\{g_{ab}, \, \epsilon^{\bf A}{}_A, \,\kappa_{\bf A}\}$ depending on a parameter $\lambda$. The variations implied by
this family of fields is then defined by 
\[
\delta g_{ab} \equiv \frac{\mbox{d}g_{ab}}{\mbox{d}\lambda} 
\bigg|_{\lambda=0}, 
 \qquad \delta \epsilon^{\bf A}{}_A \equiv \frac{\mbox{d}\epsilon^{\bf A}{}_A}{\mbox{d}\lambda} 
 \bigg|_{\lambda=0}, \qquad \delta \kappa_{\bf A} \equiv
 \frac{\mbox{d}\kappa_{\bf A}}{\mbox{d}\lambda}  \bigg|_{\lambda=0}.
\] 
In analogy to the example considered in the previous paragraphs, it
will be convenient to regard the Fr\'echet derivative 
$\delta \mathcal{F}$, which in principle depends on 
$[\mathring{g}_{ab}, \mathring{\epsilon}^{\bf A}{}_A, \mathring{\kappa}_{\bf A},
\delta g_{ab}, \delta\epsilon^{\bf A}{}_A, \delta \kappa_{\bf A}]$, 
as a functional of the arguments $[\mathring{g}_{ab}, \mathring{\epsilon}_{AB},
\mathring{\kappa}_A, \delta g_{ab}, \delta \kappa_A, T_{ab}, S_{AB}]$ where
the field $S_{AB}$ describes the dyad gauge choice. In this way one obtains a formalism 
that separates the tensor frame and spin dyad gauge in the Fr\'echet derivatives. The main 
observation in the sequel is that is is possible to obtain a 
\emph{modified variation operator} $\vartheta$ which absorbs the frame and dyad gauge terms 
so that the Fr\'echet derivative depends on the parameters $[\mathring{g}_{ab}, 
\mathring{\kappa}_A, \delta g_{ab}, \vartheta \kappa_A]$.

\medskip
\noindent
\textbf{Notational warning.} In what follows, for ease of
presentation, we mostly suppress the ring $\mathring{\phantom{x}}$
from the background quantities appearing in expressions involving
variations. If an expression does not involves variations then it
holds for both the background quantities and any other one in the
family. 

\subsection{Basic formulae for frames}
\label{Subsection:BasicVariationFrames}
Consider first the expression for the metric $g_{ab}$ in terms of the
coframe $\{ \omega^{\bf a}{}_a \}$ ---namely
\[
g_{ab} = \eta_{\bf a b} \omega^{\bf a}{}_a \omega^{\bf b}{}_b.
\]
Applying the variational operator $\delta$ to the above expression, using the Leibnitz rule, and that $\eta_{\bf a b}$ are
constants, yields
\begin{equation}
\delta g_{ab} = \eta_{\bf ab} \delta \omega^{\bf a}{}_a \omega^{\bf
  b}{}_b + \eta_{\bf ab}\omega^{\bf a}{}_a \delta
\omega^{\bf b}{}_b.
\label{VariationMetric}
\end{equation}
In certain computations it is useful to be able to express $\delta
\omega^{\bf a}{}_a$ in terms of $\delta g_{ab}$. In order to do this,
it is noticed that from \eqref{VariationMetric} it follows that
\[
\delta g_{ab} = 2 \eta_{\bf ab} \omega^{\bf a}{}_a \delta \omega^{\bf
  b}{}_b - 2 T_{ab},
\]
where 
\[
T_{ab} \equiv  \eta_ {{\bf c} {\bf d}}\omega^{\bf d}{}_{[a} \delta \omega^{{\bf c}}{}_{b]}.
\]
It then follows that 
\begin{equation}
\delta (\omega^{{\bf a}}{}_{a}) =  \tfrac{1}{2} e_{{\bf b}}{}^{b}
\eta^{{\bf b} {\bf a}} \delta g_{ab} - e_{{\bf b}}{}^{b} \eta^{{\bf b}
  {\bf a}} T_{ab}.
\label{VariationCoframeInTermsVariationMetric}
\end{equation}

A formula for the variation of the inverse metric can be 
computed by taking variations of the defining relation $\delta_a{}^b =
g_{ac} g^{cb}$. One finds that
\[
\delta(g^{dc}) ={}- g^{ad} g^{bc} \delta g_{ab}.
\]

A formula for the variation of the frame vectors $\{
e_{\bf a}{}^a\}$ in terms of the variation of $\delta \omega^{{\bf
    c}}{}_{b}$ is obtained by computing the variation of the expression
$\delta_{\bf a}{}^{\bf b} = e_{\bf a}{}^a \omega^{\bf b}{}_a$. One
finds that 
\[
\delta(e_{{\bf a}}{}^{d}) ={}- e_{{\bf a}}{}^{b} e_{{\bf c}}{}^{d} \delta \omega^{{\bf c}}{}_{b}.
\]

\usemedskip
The previous expressions can be used to compute a formula for the
variation of a covector $\xi_a$. Writing $\xi_a = \xi_{\bf a} \omega^{\bf
  a}{}_a$, one obtains that 
\begin{align*}
\delta\xi_{a}={}&\omega^{\bf b}{}_{a} \delta(\xi_{{\bf b}})
 + \tfrac{1}{2} e_{{\bf c}}{}^{b} \eta^{{\bf c} {\bf d}}  \xi_{{\bf d}}\delta g_{ab}
 -  e_{{\bf c}}{}^{b} \eta^{{\bf c} {\bf d}}  \xi_{{\bf d}}T_{ab}.
\end{align*}

\begin{remark}
An interpretation of the tensor $T_{ab}$ appearing in equation
\eqref{VariationCoframeInTermsVariationMetric} can be obtained by
considering a situation where $\delta g_{ab}=0$. In that case equation
\eqref{VariationCoframeInTermsVariationMetric} reduces to 
\[
\delta \omega^{{\bf a}}{}_{a} =  - e_{{\bf b}}{}^{b} \eta^{{\bf b}
  {\bf a}} T_{ab}.
\]
Writing $T_{ab} = T_{\bf ab} \omega^{\bf a}{}_a \omega^{\bf b}{}_b$
where $T_{\bf ab}$ denote the components of $T_{ab}$ with respect to
the coframe $\{ \omega^{\bf a}{}_a \}$ one  has that
\begin{equation*}
\delta \omega^{\bf a}{}_a = - e_{\bf b}{}^b \eta^{\bf ba} T_{\bf cd}
\omega^{\bf c}{}_a \omega^{\bf d}{}_b 
 = T^{\bf a}{}_{\bf c}
\omega^{\bf c}{}_a,
\end{equation*}
where $T^{\bf a}{}_{\bf c} \equiv - \eta^{\bf da} T_{\bf cd}
\omega^{\bf c}{}_a$.  Comparing with the discussion in Section
\ref{Section:LorentzTransformations} one sees that $T_{ab}$ encodes a rotation of
the basis. With this observation, in what follows we interpret the
second term in equation \eqref{VariationCoframeInTermsVariationMetric}
as a gauge term. 
\end{remark}

\subsection{Basic formulae for spinors}
\label{Subsection:BasicVariationSpinors}
The analysis in the previous section admits a straightforward
spinorial analogue. Given a covariant spinorial dyad $\{
\epsilon^{\bf A}{}_A\}$ one can  write
\[
\epsilon_{AB} = \epsilon_{\bf AB} \epsilon^{\bf A}{}_A \epsilon^{\bf B}{}_B.
\]
Thus, one has that
\begin{align*}
\delta \epsilon_{AB} ={}& 
\epsilon_{\bf AB} \epsilon^{\bf B}{}_B\delta \epsilon^{\bf A}{}_A 
+ \epsilon_{\bf AB} \epsilon^{\bf A}{}_A \delta\epsilon^{\bf B}{}_B \nonumber\\
={}& 2\epsilon_{\bf AB} \epsilon^{\bf  B}{}_B \delta \epsilon^{\bf A}{}_A - 2 S_{AB},
\end{align*}
where
\[
S_{AB} \equiv \epsilon_{\bf AB} \epsilon^{\bf B}{}_{(B} \delta
\epsilon^{\bf A}{}_{A)}.
\]
The variation of the contravariant antisymmetric spinor
$\epsilon^{AB}$ can be computed from the above formulae by
first computing the variation of
$\epsilon_{AB} \epsilon^{BC} = - \delta_{A}{}^{C}$ and then
multiplying with $\epsilon^{AD}$.  We  obtain that
\[
\delta(\epsilon^{DC}) ={} - \epsilon^{AD} \epsilon^{BC}\delta \epsilon_{AB}.
\]
As, $\delta\epsilon_{AB}$ is antisymmetric we can fully express it in terms of its trace as
$\delta\epsilon_{AB}=-\tfrac{1}{2}\epsilon_{AB}\delta\epsilon^C{}_{C}$.

Now, if one wants to compute $\delta \epsilon^{\bf A}{}_A$ in
terms of $\delta \epsilon_{AB} $ one has that
\begin{equation}
\delta \epsilon^{{\bf A}}{}_{A} =  \tfrac{1}{2} \epsilon^{{\bf A} {\bf
    B}}  \epsilon_{{\bf B}}{}^{B} \delta \epsilon_{AB} + \epsilon^{{\bf A}
  {\bf B}}  \epsilon_{{\bf B}}{}^{B} S_{AB}.
\label{VariationSpinBasisInTermsVariationEpsilon}
\end{equation}
If we compute the variation of $\delta_{{\bf A}}{}^{{\bf C}} = \epsilon^{\bf C}{}_{B} \epsilon_{{\bf A}}{}^{B}$ and multiply with $\epsilon_{{\bf C}}{}^{D}$ we get 
\[
\delta (\epsilon_{{\bf A}}{}^{A}) = - \epsilon_{{\bf A}}{}^{B} \epsilon_{{\bf C}}{}^{A}
\delta \epsilon^{{\bf C}}{}_{B}.
\]

\usemedskip
Now consider a covariant spinor $\phi_A$ and expand it with respect to
the spinor dyad $\{  \epsilon^{\bf A}{}_A\}$ as 
\[
\phi_A = \phi_{\bf A} \epsilon^{\bf A}{}_A. 
\]
A calculation using equation
\eqref{VariationSpinBasisInTermsVariationEpsilon} yields the
expression
\begin{eqnarray*}
&& \delta \phi_A = \delta \phi_{\bf A} \epsilon^{\bf A}{}_A +
\phi_{\bf A}
\delta \epsilon^{\bf A}{}_A \\
&& \phantom{\delta \phi_A} = \delta \phi_{\bf A} \epsilon^{\bf A}{}_A +
\tfrac{1}{2}\phi_{\bf A} \epsilon^{{\bf A} {\bf P} } \epsilon_{\bf P} {}^B \delta
\epsilon_{AB} + \phi_{\bf A} \epsilon^{{\bf A}{\bf P} } \epsilon_{\bf P}{}^B S_{AB}.
\end{eqnarray*}
Using the identity $\epsilon^{{\bf A}{\bf C} }\phi_{\bf A}
\epsilon_{\bf C}{}^B = \epsilon^{CB}\phi_C$ the variation $\delta
\phi_A$ can be reexpressed as
\[
\delta \phi_A = (\delta \phi_{\bf A}) \epsilon^{\bf A}{}_A +
\tfrac{1}{4}(\delta \epsilon^Q{}_Q) \phi_A - S_A{}^B \phi_B.
\]

\begin{remark}
As in the case of equation
\eqref{VariationCoframeInTermsVariationMetric} and the tensor
$T_{ab}$, the spinor $S_{AB}$ admits the interpretation of a
rotation. Indeed, considering a situation where $\delta
\epsilon_{AB}=0$, writing $S_{AB} = \epsilon^{\bf A}{}_A \epsilon^{\bf
  B}{}_B S_{\bf AB}$ one  finds that 
\begin{eqnarray*}
&& \delta \epsilon^{\bf A}{}_A = \epsilon^{\bf AB} \epsilon_{\bf
  B}{}^B S_{AB}\\
&& \phantom{\delta \epsilon^{\bf A}{}_A} = \epsilon^{\bf AB}
\epsilon_{\bf B}{}^B \epsilon^{\bf P}{}_A \epsilon^{\bf Q}{}_B S_{\bf
  PQ}\\
&& \phantom{\delta \epsilon^{\bf A}{}_A} = S^{\bf A}{}_{\bf B}
\epsilon^{\bf B}{}_A.
\end{eqnarray*}
Comparing with Appendix~\ref{Section:LorentzTransformations}, we find that $S_{AB}$ encodes a rotation of the spin
dyad. 
\end{remark}

\subsection{Variation of the soldering form}
In the reminder of this article we will consider a more general
setting in which both the metric $g_{ab}$ and the antisymmetric spinor
$\epsilon_{AB}$ can be varied simultaneously. 
To analyse the relation between
the variations of these two structures it is convenient to consider the
soldering form $\sigma{}_a{}^{AA'}$. 

To compute the variation of the soldering form, one starts by
computing the variation of the relation
\eqref{Definition:SolderingForm}. As we are treating the Infeld-van
der Waerden symbols as constants, their variation vanishes 
--- that is, although both the metric and spinor structure may vary,
the formal relation between tetrads and spin dyads will be preserved.
A direct combination of the methods of Sections
\ref{Subsection:BasicVariationFrames} and
\ref{Subsection:BasicVariationSpinors} on formula
\eqref{Definition:SolderingForm} lead, after a computation, to the expression
\begin{align}
\delta \sigma_a{}^{AA'}={}&  \tfrac{1}{2} \delta
\epsilon^{A}{}_{B} \sigma_a{}^{BA'}+\tfrac{1}{2} \delta\bar \epsilon^{A'}{}_{B'} \sigma_a{}^{AB'}
+ \tfrac{1}{2} g^{bc}\delta g_{ab}
\sigma_c{}^{AA'} \nonumber\\
&- \bar{S}^{A'}{}_{B'} \sigma_a{}^{AB'} -  S^{A}{}_{B}
\sigma_a{}^{BA'}  -  T_a{}^b \sigma_b{}^{AA'}.
\label{deltasigmaeq}
\end{align}
The terms in the second line of the previous expression are identified
as \emph{gauge terms}. Observe that in this case one has two types of
gauge terms: one arising from the variation of the tensor frame
and one coming from the variation of the spin frame. 

If we compute the variation of equation
\eqref{eq:dualsolderingform} and multiply with $\sigma^a{}_{AA'}$ we
get 
\[
\delta(\sigma^b{}_{AA'}) = - \delta(\sigma_a{}^{BB'})
\sigma^a{}_{AA'} \sigma^b{}_{BB'}.
\]
Multiplying equation \eqref{deltasigmaeq} with
$g^{ac}\sigma_c{}^{BB'}$ and splitting into irreducible parts, we get
the relations
\begin{align*}
\delta \sigma^a{}_{(A}{}^{(A'}\sigma_{|a|B)}{}^{B')} ={}&\tfrac{1}{2} \delta g_{(AB)}{}^{(A'B')},\\
\delta \sigma^{a(A|B'|}\sigma_a{}^{B)}{}_{B'}={}&T^{AB}
 - 2 S^{AB},\\
\delta \sigma^{aB(A'}\sigma_{|a|B}{}^{B')}={}&\bar{T}^{A'B'}
 - 2 \bar{S}^{A'B'},\\
\delta \sigma^{aBB'} \sigma_{aBB'}={}&\tfrac{1}{2} \delta g^{B}{}_{B}{}^{B'}{}_{B'}
 + \delta \epsilon^{B}{}_{B}
 + \delta \bar\epsilon^{B'}{}_{B'},
\end{align*}
where we have defined
\begin{align*}
T_{AB}\equiv{}& T_{ab} \sigma^a{}_{A}{}^{A'} \sigma^b{}_{BA'},&
\delta g_{ABA'B'}\equiv{}& \delta g_{ab} \sigma^a{}_{AA'} \sigma^b{}_{BB'}.
\end{align*}

\subsection{General variations of spinors}
The formulae for the variations of the soldering form and its dual can
now be used to compute the variation of arbitrary spinors under
variations of the metric and spinor structures. To this end, consider
spinors $\zeta^{AA'}$ and $\xi_{AA'}$. Making use of the Leibnitz rule
one obtains the expressions
\begin{subequations}
\begin{align}
 \sigma_{a}{}^{AA'}\delta\zeta^a={}&\delta(\zeta^{AA'})
 -  \tfrac{1}{4} \delta \epsilon^{B}{}_{B} \zeta^{AA'}
 -  \tfrac{1}{4} \delta \bar\epsilon^{B'}{}_{B'} \zeta^{AA'}  -  \tfrac{1}{2} \delta g^{A}{}_{B}{}^{A'}{}_{B'} \zeta^{BB'}
 \nonumber\\
&-  \tfrac{1}{2} \bar{T}^{A'}{}_{B'} \zeta^{AB'}
 + \bar{S}^{A'}{}_{B'} \zeta^{AB'}-  \tfrac{1}{2} T^{A}{}_{B} \zeta^{BA'}
  + S^{A}{}_{B} \zeta^{BA'}
, \label{GeneralVariation1}\\
\sigma^{a}{}_{AA'}\delta\xi_a ={}&\delta(\xi_{AA'})
 + \tfrac{1}{4} \delta \epsilon^{B}{}_{B} \xi_{AA'}
 + \tfrac{1}{4} \delta \bar\epsilon^{B'}{}_{B'} \xi_{AA'}+ \tfrac{1}{2} \delta g_{A}{}^{B}{}_{A'}{}^{B'} \xi_{BB'}\nonumber\\
&+  \tfrac{1}{2} \bar{T}_{A'}{}^{B'} \xi_{AB'}
 - \bar{S}_{A'}{}^{B'} \xi_{AB'}+  \tfrac{1}{2} T_{A}{}^{B} \xi_{BA'}
  - S_{A}{}^{B} \xi_{BA'}, \label{GeneralVariation2}
\end{align}
\end{subequations}
where $\zeta^a\equiv\sigma^{a}{}_{BB'}\zeta^{BB'}$ and $\xi_a\equiv\sigma_{a}{}^{BB'}\xi_{BB'}$.
We observe that both expressions contain a combination of \emph{gauge
  terms} involving the spinors $T_{AB}$ and $S_{AB}$.

\usemedskip
In view of the discussion in the previous paragraph we introduce a
general \emph{modified variation operator}. 
\begin{definition}\label{def:modvar1}
The \emph{modified variation operator} $\vartheta$ is for valence 1 spinors defined by
\begin{align*}
\vartheta\phi_{A}\equiv{}&\delta\phi_{A}
 + \tfrac{1}{4} \delta \epsilon^{B}{}_{B} \phi_{A}
 +  \tfrac{1}{2} T_{A}{}^{B} \phi_{B}
 - S_{A}{}^{B} \phi_{B},\\
\vartheta\phi^{A}\equiv{}&\delta\phi^{A}
 -  \tfrac{1}{4} \delta \epsilon^{B}{}_{B} \phi^{A}
 -  \tfrac{1}{2} T^{A}{}_{B} \phi^{B}
 + S^{A}{}_{B} \phi^{B},\\
 \vartheta\bar{\phi}_{A'}\equiv{}&\delta\bar{\phi}_{A'}
 + \tfrac{1}{4} \delta \bar\epsilon^{B'}{}_{B'} \bar{\phi}_{A'}
 +  \tfrac{1}{2} \bar{T}_{A'}{}^{B'} \bar{\phi}_{B'}
 - \bar{S}_{A'}{}^{B'} \bar{\phi}_{B'},\\
\vartheta\bar{\phi}^{A'}\equiv{}&\delta\bar{\phi}^{A'}
 -  \tfrac{1}{4} \delta \bar\epsilon^{B'}{}_{B'} \bar{\phi}^{A'}
 -  \tfrac{1}{2} \bar{T}^{A'}{}_{B'} \bar{\phi}^{B'}
 + \bar{S}^{A'}{}_{B'} \bar{\phi}^{B'},
\end{align*}
and extended to arbitrary valence spinors by the Leibnitz rule. 
\end{definition}

In particular, using the above
definitions in expressions
\eqref{GeneralVariation1}-\eqref{GeneralVariation2} one finds that 
\begin{align*}
&\sigma_{a}{}^{AA'}\delta\zeta^{a} ={}\vartheta\zeta^{AA'}
 -  \tfrac{1}{2} \delta g^{A}{}_{B}{}^{A'}{}_{B'} \zeta^{BB'},\\
&\sigma^{a}{}_{AA'}\delta\xi_{a} ={}\vartheta\xi_{AA'}
 + \tfrac{1}{2} \delta g_{A}{}^{B}{}_{A'}{}^{B'} \xi_{BB'},
\end{align*}
showing that $\vartheta\zeta^{AA'}$ and $\vartheta\zeta_{AA'}$ are
frame gauge independent. Moreover, a further calculation shows that 
\[
\vartheta \epsilon_{AB}=0
\]
so that the process of raising and lowering spinor indices commutes
with the modified variation $\vartheta$ operator.

\begin{remark}
Expanding the $\phi_{A}$ in terms of the spin dyad in the $\delta\phi_{A}$ term in Definition~\ref{def:modvar1} gives
\begin{align}
\vartheta(\phi_{A})={}& \epsilon^{\bf B}{}_{A} \delta(\phi_{{\bf B}})   +  \tfrac{1}{2} T_{A}{}^{B} \phi_{B}.
\end{align}
Observe that the $S_{AB}$ and $\delta\epsilon_{AB}$ terms cancel out.
\end{remark}

\section{Variations and the covariant derivative}
\label{Section:CovDevs}

The purpose of this section is to to analyse te relation between the
variation operators $\delta$ and $\vartheta$ and the Levi-Civita
connection $\nabla_a$ of the metric $g_{ab}$. 

\subsection{Basic tensorial relations}

Our analysis of the variations of expressions involving covariant
derivatives is based on the following basic assumption:

\begin{assumption}
For any scalar field $f$ over $\mathcal{M}$ one
has that
\begin{equation}
\nabla_a\delta f=\delta(\nabla_a f)
\label{CommutationNabladelta}
\end{equation}
\end{assumption}

In what follows, define the \emph{frame dependent tensor}
\[
\gamma_a{}^b{}_c \equiv{} - e_{{\bf c}}{}^b \nabla_{a}\omega^{\bf c}{}_{c}.
\]
The tensor $\gamma_a{}^b{}_c$ can be regarded as a convenient way of
grouping the connection coefficients $\gamma_\bfa{}^\bfb{}_\bfc$ of
the connection $\nabla_a$ with respect to the frame $\{
e_\bfa{}^a\}$. A calculation shows, indeed, that 
\[
\gamma_a{}^b{}_c = \gamma_\bfa{}^\bfb{}_\bfc \,\omega^\bfa{}_a
\omega^\bfc{}_c e_\bfb{}^b.
\]
We can express all covariant derivatives of the cobasis and the basis in terms of $\gamma_a{}^b{}_c$ via
\[
\nabla_{a}\omega^{{\bf f}}{}_{c}={}- \omega^{\bf f}{}_{b}
\gamma_{a}{}^{b}{}_{c},\qquad 
\nabla_{d}e_{{\bf f}}{}^{b}={}e_{{\bf f}}{}^{a} \gamma_{d}{}^{b}{}_{a}.
\]
Differentiating the orthonormality condition  $\eta^{{\bf a} {\bf b}}
= \omega^{\bf a}{}_{c} \omega^{\bf b}{}_{d} g^{cd}$ 
and multiplying with $e_{{\bf a}}{}^{h} e_{{\bf b}}{}^{l}$ we 
get the relation
\begin{equation}
\gamma_{f}{}^{(a}{}_{c}g^{b)c}=0\label{eq:metriccompgamma}
\end{equation}
encoding the metric compatibility of $\nabla_a$. The variation of this gives
\begin{align}
\delta \gamma_{f}{}^{(ab)}={}&\gamma_{f}{}^{(a|c|}\delta g^{b)}{}_{c}.\label{eq:symdeltagamma}
\end{align}

Now, for any covector $\xi_a$, its covariant derivative can be
expanded in terms of the frame as 
\begin{align*}
\nabla_{a}\xi_{b}={}&- \omega^{\bf c}{}_{d} \gamma_{a}{}^{d}{}_{b} \xi_{{\bf c}}
 + \omega^{{\bf c}}{}_{b} \nabla_{a}\xi_{{\bf c}}.
\end{align*}
Computing the variation of this last expression, and using the
relations above, gives after some straightforward calculations
\begin{align}
\delta(\nabla_{a}\xi_{b})={}&- \delta \gamma_{a}{}^{c}{}_{b} \xi_{c}
 + T_{c}{}^{d} \gamma_{abd} \xi^{c}
 -  T_{b}{}^{d} \gamma_{acd} \xi^{c}
 + \tfrac{1}{2} \gamma_{ac}{}^{d} \delta g_{bd} \xi^{c}
 + \tfrac{1}{2} \gamma_{ab}{}^{d} \delta g_{cd} \xi^{c}
 + \xi^{c} \nabla_{a}T_{bc}\nonumber\\
& -  \tfrac{1}{2} \xi^{c} \nabla_{a}\delta g_{bc}
 + \nabla_{a}\delta \xi_{b}.\label{eq:deltanablaxi1}
\end{align}
In the previous calculation Assumption \eqref{CommutationNabladelta} has been used. If we
use relation \eqref{eq:deltanablaxi1} with $\xi_a=\nabla_a f$,
antisymmetrize over $a$ and $b$, and assume that the connection is
torsion free, we get
\begin{align*}
0={}& (T_{c}{}^{d} \gamma_{[ab]d}
 + \tfrac{1}{2} \delta g_{c}{}^{d} \gamma_{[ab]d}
 -  \delta \gamma_{[a|c|b]}
 + \nabla_{[a}T_{b]c}
 -  \tfrac{1}{2} \nabla_{[a}\delta g_{b]c}
 + T_{[a}{}^{d}\gamma_{b]cd}
 + \tfrac{1}{2} \gamma_{[a|c|}{}^{d}\delta g_{b]d})\nabla^{c}f.
\end{align*}
Hence, the torsion free condition is encoded by
\begin{align}
\delta \gamma_{[a|c|b]}={}&T_{c}{}^{d} \gamma_{[ab]d}
 + \tfrac{1}{2} \delta g_{c}{}^{d} \gamma_{[ab]d}
 + \nabla_{[a}T_{b]c}
 -  \tfrac{1}{2} \nabla_{[a}\delta g_{b]c}
 + T_{[a}{}^{d}\gamma_{b]cd}
 + \tfrac{1}{2} \gamma_{[a|c|}{}^{d}\delta g_{b]d}.\label{eq:torsioncond}
\end{align}
Now, using the identity
\begin{align*}
\delta \gamma_{abc}={}&\delta \gamma_{[a|b|c]}
 -  \delta \gamma_{[a|c|b]}
 + \delta \gamma_{[b|a|c]}
 + \delta \gamma_{a(bc)}
 -  \delta \gamma_{b(ac)}
 + \delta \gamma_{c(ab)},
\end{align*}
we can use equations \eqref{eq:symdeltagamma} and \eqref{eq:torsioncond} to compute
\begin{align}
\delta \gamma_{abc}={}&- T_{c}{}^{d} \gamma_{abd}
 + T_{b}{}^{d} \gamma_{acd}
 + \tfrac{1}{2} \gamma_{ac}{}^{d} \delta g_{bd}
 + \tfrac{1}{2} \gamma_{ab}{}^{d} \delta g_{cd}
 -  \nabla_{a}T_{bc}
 -  \tfrac{1}{2} \nabla_{b}\delta g_{ac}
 + \tfrac{1}{2} \nabla_{c}\delta g_{ab}.
\end{align}
It follows then that equation \eqref{eq:deltanablaxi1} can therefore be simplified to
\begin{equation}
\delta(\nabla_{a}\xi_{b})={}\nabla_{a}(\delta \xi_{b})
 -  \tfrac{1}{2} g^{cd} (\nabla_{a}\delta g_{bc}
 +  \nabla_{b}\delta g_{ac}
 -  \nabla_{c}\delta g_{ab}) \xi_{d} .\label{eq:varcovd1}
\end{equation}
It is important to observe that this formula is a tensorial
expression. Hence, it allows to define a \emph{transition tensor}
\begin{equation}
Q_b{}^a{}_c\equiv \tfrac{1}{2} g^{ad} (\nabla_{b}\delta g_{dc}
 +  \nabla_{c}\delta g_{bd}
 -  \nabla_{d}\delta g_{bc})
\label{Definition:TransitionTensor}
\end{equation}
 relating the connections $\nabla_a$ and $\delta \nabla_a$. This is
not surprising as it is well known that the space of covariant
derivatives on a manifold is an affine space. Making use the
definition of $Q_b{}^a{}_{c}$, equation \eqref{eq:varcovd1} takes the
suggestive form
\begin{equation}
\delta(\nabla_{a}\xi_{b})={}\nabla_{a}(\delta \xi_{b})
 - Q_b{}^d{}_a \xi_d.\label{eq:deltanablaxi}
\end{equation}
Furthermore, making use of the Leibnitz rule one finds that for an
arbitrary vector $v^a$ one has
\[
\delta(\nabla_a v^b) = \nabla_a (\delta v^b) + Q_c{}^b{}_a v^c.
\]
The extension to higher valence tensors follows in a similar manner.

\subsection{Spinorial expressions}
In order to discuss the variations of the spinor covariant derivative
$\nabla_{AA'}$ associated to the Levi-Civita connection $\nabla_a$ it
is convenient to define a spinorial analogue of the tensor
$\gamma_a{}^b{}_c$ ---namely
\[
\gamma{}_{a}{}^{B}{}_{C} \equiv{} - \epsilon_{{\bfC}}{}^{B} \nabla_{a}\epsilon^{\bfC}{}_{C}.
\]
The hybrid $\gamma{}_{a}{}^{B}{}_{C}$ is related to
$\gamma_{a}{}^{BB'}{}_{CC'}\equiv\gamma_a{}^b{}_c\sigma_b{}^{BB'}\sigma^c{}_{CC'}$
through the decomposition
\[
\gamma_{a}{}^{BB'}{}_{CC'} = \gamma_{a}{}^B{}_C \delta_{C'}{}^{B'} + \bar{\gamma}_{a}{}^{B'}{}_{C'} \delta_C{}^B. 
\]
It follows then that
\begin{equation}
\gamma_{a}{}^{B}{}_{C} = \tfrac{1}{2} \gamma_a{}^{c}{}_{b}\sigma^{b}{}_{CB'} \sigma_{c}{}^{BB'}.
\label{GammaTensorToGammaSpinor}
\end{equation}
From this last expression can then  be verified that
\[
\gamma_{aBC}= \gamma_{aCB}.
\]

\usemedskip
The variational derivative of $\gamma_{a}{}^{B}{}_{C}$ can be
computed using equation \eqref{GammaTensorToGammaSpinor}. One finds
that
\begin{align}
\delta(\gamma_a{}^{B}{}_{C})={}&\tfrac{1}{4} Q_{acd} \sigma^{dBB'} \sigma^{c}{}_{CB'}
 -  \tfrac{1}{4} Q_{acd} \sigma^{dBB'} \sigma^{c}{}_{CB'}
-  \tfrac{1}{2} \gamma_{acd} S_{CD} \sigma^{cBB'} \sigma^{dD}{}_{B'}\nonumber\\
&-  \tfrac{1}{2} \gamma_{acd} S^{B}{}_{D} \sigma^{c}{}_{C}{}^{B'} \sigma^{dD}{}_{B'}
 -  \tfrac{1}{2} \nabla_{a}T^{B}{}_{C}.
\label{VariationGammaSpinor}
\end{align}
In this last expression observe, in particular, the appearance of the
gauge spinors $S_{AB}$ and $T_{AB}$. In turn,
equation \eqref{VariationGammaSpinor} can be used to compute the
variation of the covariant derivative of an arbitrary spinor
$\kappa_A$.  Expanding $\kappa_A$ in terms of the spin dyad and
differentiating we get
\[
\nabla_{a}\kappa_{B} =  \epsilon^{\bf C}{}_{B} \nabla_{a}\kappa_{{\bf C}}- \gamma_a{}^{C}{}_{B} \kappa_{C} .
\]
It follows that the variation of this last expression is given by
\begin{align*}
\delta(\nabla_{a}\kappa_{A})={}&
  \nabla_{a}\delta \kappa_{A}
 -  \tfrac{1}{2} \kappa^{B} \nabla_{a}T_{AB}
 + \kappa^{B} \nabla_{a}S_{AB}
+ \tfrac{1}{4} \kappa_{A} \nabla_{a}\delta \epsilon^{B}{}_{B} \nonumber\\
&+\tfrac{1}{4} Q_{abc} \kappa^{B} \sigma^{b}{}_{A}{}^{A'} \sigma^{c}{}_{BA'}
 -  \tfrac{1}{4} Q_{acb} \kappa^{B} \sigma^{b}{}_{A}{}^{A'} \sigma^{c}{}_{BA'}\nonumber\\
={}& \nabla_{a}\vartheta \kappa_{A}
 -  \tfrac{1}{4} \delta \epsilon^{B}{}_{B} \nabla_{a}\kappa_{A}
  + \tfrac{1}{2} T_{AB} \nabla_{a}\kappa^{B}
 -  S_{AB} \nabla_{a}\kappa^{B}\nonumber\\
&+\tfrac{1}{4} Q_{abc} \kappa^{B} \sigma^{b}{}_{A}{}^{A'} \sigma^{c}{}_{BA'}
 -  \tfrac{1}{4} Q_{acb} \kappa^{B} \sigma^{b}{}_{A}{}^{A'} \sigma^{c}{}_{BA'}.
\end{align*}

\usemedskip
In order to write the spinorial derivative $\nabla_{AA'} \kappa_B$
(rather than $\nabla_a \kappa_B$) it is convenient to define the
spinor
\begin{equation}\label{Definition:QopSpinor}
\Qop_{AA'BC}\equiv - \tfrac{1}{2} \sigma^{a}{}_{AA'} \sigma^{b}{}_{B}{}^{B'} \sigma^{c}{}_{CB'} Q_{[bc]a}.
\end{equation}

\begin{theorem}
The variation of the covariant derivative of a spinor is given by
\begin{subequations}
\begin{align}\label{eq:ThmVarDerSpin}
\vartheta(\nabla_{AA'}\kappa_{B})={}&
  \nabla_{AA'}\vartheta \kappa_{B}
+\Qop_{AA'BC} \kappa^{C}
 -  \tfrac{1}{2} \delta g_{ACA'B'} \nabla^{CB'}\kappa_{B},\\
\vartheta(\nabla_{AA'}\bar{\kappa}_{B'})={}&
  \nabla_{AA'}\vartheta\bar{\kappa}_{B'}
+\bar\Qop_{A'AB'C'} \bar{\kappa}^{C'}
 -  \tfrac{1}{2} \delta g_{ABA'C'} \nabla^{BC'}\bar{\kappa}_{B'}.\label{eq:ThmVarDerSpinDg}
\end{align}
\end{subequations}
\end{theorem}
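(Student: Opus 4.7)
The starting point is the expression for $\delta(\nabla_a\kappa_A)$ derived immediately before the statement, which already packages the variation of the tensor-index covariant derivative as $\nabla_a\vartheta\kappa_A$ plus three dyad/frame gauge contributions (in $T_{AB}$, $S_{AB}$, $\delta\epsilon^{B}{}_{B}$) and two $Q$-terms built from the transition tensor $Q_{abc}$. The plan is to promote the tensor index $a$ to the spinor pair $AA'$ by contracting with $\sigma^{a}{}_{AA'}$, apply the definition of $\vartheta$ to the resulting valence-3 covariant spinor $\nabla_{AA'}\kappa_B$, and then verify that the frame and dyad gauge contributions cancel, leaving only the geometric $\Qop$-piece and the $\delta g$-piece asserted in \eqref{eq:ThmVarDerSpin}.

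First I would multiply the formula for $\delta(\nabla_a\kappa_B)$ by $\sigma^{a}{}_{AA'}$ and add $\delta(\sigma^{a}{}_{AA'})\nabla_a\kappa_B$ on both sides, so that the left-hand side becomes $\delta(\nabla_{AA'}\kappa_B)$. The variation $\delta(\sigma^{a}{}_{AA'})$ is obtained from \eqref{deltasigmaeq} via the identity $\delta\sigma^{a}{}_{AA'}=-\delta\sigma_c{}^{CC'}\sigma^{c}{}_{AA'}\sigma^{a}{}_{CC'}$, and introduces additional frame/dyad gauge pieces together with a $\delta g$-contribution; simultaneously the tensor derivative term absorbs into $\sigma^{a}{}_{AA'}\nabla_a\vartheta\kappa_B=\nabla_{AA'}\vartheta\kappa_B$. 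I would then apply the modified variation operator to the valence-3 covariant spinor $\nabla_{AA'}\kappa_B$ (two lower unprimed, one lower primed indices) via the Leibnitz extension of Definition~\ref{def:modvar1}, producing three copies of gauge corrections on the left-hand side.

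At this point the task reduces to bookkeeping: the $T$, $S$, $\delta\epsilon^{B}{}_{B}$, $\bar T$, $\bar S$, $\delta\bar\epsilon^{B'}{}_{B'}$ pieces generated on the two sides must cancel, a consistency already suggested by the frame-gauge-independence observed after \eqref{GeneralVariation2}. What survives is a combination of $Q$-terms and $\delta g$-terms, and the main obstacle is rearranging these into the precise form stated in the theorem. I expect to split the $Q_{bca}$ tensor into its symmetric and antisymmetric parts in $bc$, assembling the antisymmetric part $Q_{[bc]a}$ into $\Qop_{AA'BC}\kappa^{C}$ via the definition \eqref{Definition:QopSpinor}, while the symmetric remainder, together with the $\delta g$-piece coming from $\delta(\sigma^{a}{}_{AA'})$, collapses to $-\tfrac{1}{2}\delta g_{ACA'B'}\nabla^{CB'}\kappa_B$ after use of \eqref{Definition:TransitionTensor} and the symmetry $\delta g_{ab}=\delta g_{(ab)}$. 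Finally, the primed identity \eqref{eq:ThmVarDerSpinDg} follows by complex conjugation, since $\vartheta$, $\nabla_{AA'}$, and the $\Qop$-construction all commute with conjugation.
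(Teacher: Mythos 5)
Your proposal is correct and follows essentially the same route as the paper's own proof: contract the previously derived formula for $\delta(\nabla_a\kappa_B)$ with $\sigma^a{}_{AA'}$, add the contribution of $\delta(\sigma^a{}_{AA'})$ obtained from \eqref{deltasigmaeq}, absorb the resulting frame and dyad gauge terms into the Leibnitz extension of $\vartheta$ acting on the valence-3 spinor $\nabla_{AA'}\kappa_B$, and obtain \eqref{eq:ThmVarDerSpinDg} by complex conjugation. The only cosmetic deviation is in your treatment of the $Q$-terms: the two terms in the starting formula already constitute the antisymmetric combination $Q_{a[bc]}$, so they assemble directly into $\Qop_{AA'BC}\kappa^C$ with no symmetric remainder to dispose of, and the $-\tfrac{1}{2}\delta g_{ACA'B'}\nabla^{CB'}\kappa_B$ term arises entirely from the variation of the soldering form.
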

\begin{proof}
Using the expressions in the previous paragraphs one has that 
\begin{align*}
\delta(\nabla_{AA'}\kappa_{B})={}&
  \nabla_{AA'}\vartheta \kappa_{B}
+\Qop_{AA'BC} \kappa^{C}
 -  \tfrac{1}{4} \delta \epsilon^{C}{}_{C} \nabla_{AA'}\kappa_{B}
 + \tfrac{1}{2} T_{BC} \nabla_{AA'}\kappa^{C}
 -  S_{BC} \nabla_{AA'}\kappa^{C}\nonumber\\
& -  \delta \sigma_a{}^{CB'} \sigma^a{}_{AA'} \nabla_{CB'}\kappa_{B}\\
={}& \nabla_{AA'}\vartheta \kappa_{B}
+\Qop_{AA'BC} \kappa^{C}
 -  \tfrac{1}{2} \delta \epsilon^{C}{}_{C} \nabla_{AA'}\kappa_{B}
 -  \tfrac{1}{4} \delta \bar\epsilon^{B'}{}_{B'} \nabla_{AA'}\kappa_{B}
 + \tfrac{1}{2} T_{BC} \nabla_{AA'}\kappa^{C}\nonumber\\
& -  S_{BC} \nabla_{AA'}\kappa^{C}
 + \tfrac{1}{2} \bar{T}_{A'B'} \nabla_{A}{}^{B'}\kappa_{B}
 -  \bar{S}_{A'B'} \nabla_{A}{}^{B'}\kappa_{B}
 + \tfrac{1}{2} T_{AC} \nabla^{C}{}_{A'}\kappa_{B}\nonumber\\
& -  S_{AC} \nabla^{C}{}_{A'}\kappa_{B}
 -  \tfrac{1}{2} \delta g_{ACA'B'} \nabla^{CB'}\kappa_{B}.
\end{align*}
Expressing the above formula in terms of the modified variation $\vartheta$, we get \eqref{eq:ThmVarDerSpin}. The equation \eqref{eq:ThmVarDerSpinDg} is given by complex conjugation.
\end{proof}

\subsubsection{Decomposition of $\Qop_{AA'BC}$}
Starting from the definition in equation \eqref{Definition:QopSpinor},  a calculation yields 
\begin{align*}
\Qop_{AA'BC}={}&- \tfrac{1}{4} \sigma^{a}{}_{AA'} \sigma^{b}{}_{B}{}^{B'} \sigma^{c}{}_{CB'} \nabla_{b}\delta g_{ac}
 + \tfrac{1}{4} \sigma^{a}{}_{AA'} \sigma^{b}{}_{B}{}^{B'} \sigma^{c}{}_{CB'} \nabla_{c}\delta g_{ab}\\
={}&- \tfrac{1}{2} \nabla_{(B}{}^{B'}\delta g_{C)AB'A'}.
\end{align*}
The above expression can be conveniently decomposed in irreducible
terms. To this end, one defines
\[
G\equiv \delta g^{C}{}_{C}{}^{C'}{}_{C'},\qquad 
G_{ABA'B'} \equiv {}\delta g_{(AB)(A'B')}.
\]
If we also decompose $\Qop_{ABCA'}$ into irreducible parts, we get
\begin{align}\label{Decomposition:SpinorQop}
\Qop_{AA'BC}={}&- \tfrac{1}{2} \nabla_{(A}{}^{B'}G_{BC)A'B'}
 +  \tfrac{1}{8} \epsilon_{A(B}\nabla_{C)A'}G
 - \tfrac{1}{6} \epsilon_{A(B}\nabla^{DB'}G_{C)DA'B'}.
\end{align}

For future use we notice the following relations which follow from the
decomposition in irreducible components of equation
\eqref{Decomposition:SpinorQop} and the reality of $\delta g_{ABA'B'}$:
\begin{align*}
\Qop^{B}{}_{A'AB}={}&- \tfrac{3}{16} \nabla_{AA'}G
 + \tfrac{1}{4} \nabla_{BB'}G_{A}{}^{B}{}_{A'}{}^{B'},\\
\bar\Qop^{B'}{}_{AA'B'}={}&\Qop^{B}{}_{A'AB},\\
\nabla_{BB'}G_{CDA'}{}^{B'}={}&2 \Qop_{BA'CD}
 - 4 \Qop^{A}{}_{A'(C|A|}\epsilon_{D)B}
 -  \tfrac{1}{2} \epsilon_{(C|B|}\nabla_{D)A'}G,\\
 \nabla_{BA'}G_{A}{}^{B}{}_{B'C'}={}&2 \bar\Qop_{A'AB'C'}
 - 4 \Qop^{B}{}_{(B'|AB|}\bar\epsilon_{C')A'}
 -  \tfrac{1}{2} \bar\epsilon_{(B'|A'}\nabla_{A|C')}G.
\end{align*}

We also define the field
\begin{align}
F^{AA'}\equiv{}&\nabla_{BB'}\delta g^{ABA'B'} - \tfrac{1}{2} \nabla^{AA'}\delta g^{B}{}_{B}{}^{B'}{}_{B'}\label{eq:gaugesourcedef}\\
={}&\nabla_{BB'}G^{ABA'B'} - \tfrac{1}{4} \nabla^{AA'}G.\nonumber
\end{align}
In the next section we will see that this can be interpreted as a
gauge source function for the linearised diffeomorphisms.

\subsection{Diffeomorphism dependence}\label{sec:diffeomorphisms}
We will now briefly consider the dependence on diffeomorphisms.  Let
$\phi_\lambda$ be a one parameter group of diffeomorphisms generated
by a vector field $\xi^a$ and such that
$g_{ab}[\lambda]=\phi^*_{-\lambda}\mathring g_{ab}$. The metrics in
this family have the same geometric content 
and one readily finds that
\begin{align}
\delta g_{ab}={}&\mathcal{L}_\xi g_{ab}
=2 \nabla_{(a}\xi_{b)}.\label{eq:puregauge}
\end{align}
Moreover, a further computation yields
\begin{align*}
\Qop_{AA'BC}={}&- \tfrac{1}{2} \nabla_{(C}{}^{B'}\nabla_{B)B'}\xi_{AA'}
 -  \tfrac{1}{2} \nabla_{(C}{}^{B'}\nabla_{|AA'|}\xi_{B)B'},\\
F^{AA'}={}&\nabla_{BB'}\nabla^{BB'}\xi^{AA'}
-6 \Lambda \xi^{AA'}
 + 2 \Phi^{A}{}_{B}{}^{A'}{}_{B'} \xi^{BB'}.
\end{align*}

Given a general family of metrics $g_{ab}[\lambda]$, we can compute
the field $F^{AA'}$ associated to the family. Given any $\widetilde
F^{AA'}$, we can then solve the wave equation
\begin{align*} \widetilde F^{AA'} - F^{AA'}={}&-6 \Lambda \xi^{AA'} +
2 \Phi^{A}{}_{B}{}^{A'}{}_{B'} \xi^{BB'} +
\nabla_{BB'}\nabla^{BB'}\xi^{AA'}. 
\end{align*} The solution $\xi^a=\sigma^a_{AA'}\xi^{AA'}$ to this
equations will then give a one parameter group of diffeomorphisms
$\phi_\lambda$, such that
$g_{ab}[\lambda]=\phi^*_{-\lambda}g_{ab}[\lambda]$ has the same
geometric content, but with corresponding $\widetilde F^{AA'}$. With
this observation, we can interpret \eqref{eq:gaugesourcedef} as a
gauge source function for the linearised diffeomorphisms.

\section{Variation of curvature}
\label{Section:Curvature}
The purpose of this section is to compute the variation of the various
spinorial components of the curvature tensor. As it will be seen
below, the starting point of this computation is the commutator of
covariant derivatives. 

\usemedskip
We start by computing the variation of 
\[
\square_{(AB} \kappa_{C)} =\nabla_{(A}{}^{A'}\nabla_{B|A'|}\kappa_{C)}=-\Psi_{ABCD} \kappa^{D}
\]
for an arbitrary spinor $\kappa_A$. A direct calculation using the
Leibnitz rule for the modified commutator $\vartheta$ gives 
\begin{align*}
\Psi_{ABCD} \vartheta \kappa^{D} + \vartheta(\Psi_{ABCD}) \kappa^{D}={}&- \nabla_{(A}{}^{A'}\nabla_{B|A'|}\vartheta \kappa_{C)}
 + \tfrac{1}{4} G \nabla_{(A}{}^{A'}\nabla_{B|A'|}\kappa_{C)}\nonumber\\
& -  \tfrac{1}{2} G_{(A}{}^{DA'B'}\nabla_{B|A'}\nabla_{DB'|}\kappa_{C)}
 + \tfrac{1}{2} G_{(A}{}^{DA'B'}\nabla_{|DA'|}\nabla_{B|B'|}\kappa_{C)}\nonumber\\
& + \Qop_{(A}{}^{A'}{}_{B}{}^{D}\nabla_{|DA'|}\kappa_{C)}
 + \bar\Qop^{A'}{}_{(A|A'|}{}^{B'}\nabla_{B|B'|}\kappa_{C)}\nonumber\\
& -  \kappa^{D}\nabla_{(A}{}^{A'}\Qop_{B|A'|C)D}
 + \tfrac{1}{8} \nabla_{(A}{}^{A'}G\nabla_{B|A'|}\kappa_{C)}\nonumber\\
& + \tfrac{1}{2} \nabla_{(A}{}^{A'}G_{B}{}^{D}{}_{|A'}{}^{B'}\nabla_{DB'|}\kappa_{C)}\\
={}&\Psi_{ABCD} \vartheta \kappa^{D}
 -  \tfrac{1}{4} G \Psi_{ABCD} \kappa^{D}
 -  \kappa^{D} \nabla_{(A}{}^{A'}\Qop_{B|A'|C)D}\nonumber\\
& + \tfrac{1}{2} \kappa^{D} G_{(AB}{}^{A'B'}\Phi_{C)DA'B'}.
\end{align*}
The above expression holds for all $\kappa^A$, and therefore we can
conclude that 
\[
\vartheta(\Psi_{ABCD})=- \tfrac{1}{4} G \Psi_{ABCD}
 -  \nabla_{(A}{}^{A'}\Qop_{B|A'|C)D}
 + \tfrac{1}{2} G_{(AB}{}^{A'B'}\Phi_{C)DA'B'}.
\]
The symmetry of $\Psi_{ABCD}$ can be used to simplify this last expression
---the trace of the right hand side can be shown to vanish due to the
commutators.

\usemedskip
If we compute the variation of
\[
\Phi_{BAA'B'} \kappa^{A}=- \nabla^{A}{}_{(A'}\nabla_{|A|B')}\kappa_{B}
\]
we get
\begin{align}
\Phi_{BAA'B'} \vartheta \kappa^{A} + \vartheta(\Phi_{BAA'B'}) \kappa^{A}={}&- \nabla^{A}{}_{(A'}\nabla_{|A|B')}\vartheta \kappa_{B}
 + \tfrac{1}{4} G \nabla^{A}{}_{(A'}\nabla_{|A|B')}\kappa_{B}\nonumber\\
& -  \tfrac{1}{2} G^{AC}{}_{(A'}{}^{C'}\nabla_{|A|B')}\nabla_{CC'}\kappa_{B}
 + \tfrac{1}{2} G^{AC}{}_{(A'}{}^{C'}\nabla_{|AC'}\nabla_{C|B')}\kappa_{B}\nonumber\\
& + \Qop^{A}{}_{(A'|A}{}^{C}\nabla_{C|B')}\kappa_{B}
 + \bar\Qop_{(A'}{}^{A}{}_{B')}{}^{C'}\nabla_{AC'}\kappa_{B}\nonumber\\
& -  \kappa^{A}\nabla^{C}{}_{(A'}\Qop_{|C|B')BA}
 + \tfrac{1}{8} \nabla^{A}{}_{(A'}G\nabla_{|A|B')}\kappa_{B}\nonumber\\
& + \tfrac{1}{2}
\nabla^{A}{}_{(A'}G_{|A|}{}^{C}{}_{B')}{}^{C'}\nabla_{CC'}\kappa_{B} \nonumber\\
={}&\Phi_{BAA'B'} \vartheta \kappa^{A}
 + G_{BAA'B'} \Lambda \kappa^{A}
 -  \tfrac{1}{4} G \Phi_{BAA'B'} \kappa^{A}\nonumber\\
& + \tfrac{1}{2} G^{CD}{}_{A'B'} \Psi_{BACD} \kappa^{A}
 -  \kappa^{A} \nabla^{C}{}_{(A'}\Qop_{|C|B')BA}. \nonumber
\end{align}
The last relation holds for all $\kappa^A$, and therefore we can obtain
an expression for $\vartheta \Phi_{ABA'B'}$.

\usemedskip
Now, using the definition of $\Qop_{ABCA'}$, commuting derivatives and
exploiting the irreducible decomposition of the various fields
involved one gets
\begin{align}
\nabla_{AA'}\Qop^{CA'}{}_{BC}={}&- \tfrac{1}{2} G_{B}{}^{CA'B'} \Phi_{ACA'B'}
 -  \tfrac{1}{2} G_{A}{}^{CA'B'} \Phi_{BCA'B'} \nonumber \\
& + \nabla_{CA'}\Qop_{A}{}^{A'}{}_{B}{}^{C}
 + \epsilon_{AB} \nabla^{CA'}\Qop^{D}{}_{A'CD}. \label{eq:nablaQopEq2}
\end{align} 
If we compute the variation of
\[
\Lambda \kappa_{A}=\tfrac{1}{3} \nabla_{(A}{}^{A'}\nabla_{B)A'}\kappa^{B}
\]
we get, after a lengthy computation, that 
\begin{subequations}
\begin{align}
\Lambda \vartheta \kappa_{A} + \vartheta(\Lambda) \kappa_{A}={}&\tfrac{1}{6} \kappa^{B} \nabla_{AA'}\Qop^{CA'}{}_{BC}
 -  \tfrac{1}{6} \nabla_{AA'}\nabla_{B}{}^{A'}\vartheta \kappa^{B}
 + \tfrac{1}{24} G \nabla_{AA'}\nabla_{B}{}^{A'}\kappa^{B}\nonumber\\
& + \tfrac{1}{6} \bar\Qop^{B'}{}_{BA'B'} \nabla_{A}{}^{A'}\kappa^{B}
 -  \tfrac{1}{12} G_{BCA'B'} \nabla_{A}{}^{B'}\nabla^{CA'}\kappa^{B}
 -  \tfrac{1}{48} \nabla_{A}{}^{A'}G \nabla_{BA'}\kappa^{B}\nonumber\\
& -  \tfrac{1}{6} \nabla_{BA'}\nabla_{A}{}^{A'}\vartheta \kappa^{B}
 + \tfrac{1}{24} G \nabla_{BA'}\nabla_{A}{}^{A'}\kappa^{B}
 + \tfrac{1}{6} \bar\Qop^{B'}{}_{AA'B'} \nabla_{B}{}^{A'}\kappa^{B}\nonumber\\
& -  \tfrac{1}{12} G_{ACA'B'} \nabla_{B}{}^{B'}\nabla^{CA'}\kappa^{B}
 + \tfrac{1}{48} \nabla_{AA'}\kappa_{B} \nabla^{BA'}G
 -  \tfrac{1}{6} \kappa^{B} \nabla_{CA'}\Qop_{A}{}^{A'}{}_{B}{}^{C}\nonumber\\
& -  \tfrac{1}{6} \Qop_{AA'BC} \nabla^{CA'}\kappa^{B}
 -  \tfrac{1}{6} \Qop_{BA'AC} \nabla^{CA'}\kappa^{B}
 + \tfrac{1}{12} \nabla_{AB'}G_{BCA'}{}^{B'} \nabla^{CA'}\kappa^{B}\nonumber\\
& + \tfrac{1}{12} \nabla_{BB'}G_{ACA'}{}^{B'} \nabla^{CA'}\kappa^{B}
 + \tfrac{1}{12} G_{BCA'B'} \nabla^{CB'}\nabla_{A}{}^{A'}\kappa^{B}\nonumber\\
& + \tfrac{1}{12} G_{ACA'B'} \nabla^{CB'}\nabla_{B}{}^{A'}\kappa^{B}\nonumber\\
={}&\Lambda \vartheta \kappa_{A}
 -  \tfrac{1}{4} G \Lambda \kappa_{A}
 + \tfrac{1}{6} G_{A}{}^{CA'B'} \Phi_{BCA'B'} \kappa^{B}
 + \tfrac{1}{6} \kappa^{B} \nabla_{AA'}\Qop^{CA'}{}_{BC}\nonumber\\
& -  \tfrac{1}{6} \kappa^{B}
\nabla_{CA'}\Qop_{A}{}^{A'}{}_{B}{}^{C}\nonumber \\
={}&\Lambda \vartheta \kappa_{A}
 -  \tfrac{1}{4} G \Lambda \kappa_{A}
 + \tfrac{1}{12} G^{BCA'B'} \Phi_{BCA'B'} \kappa_{A}
 -  \tfrac{1}{6} \kappa_{A}
 \nabla_{CA'}\Qop^{BA'}{}_{B}{}^{C}.\nonumber 
\end{align}
\end{subequations}

In the last equality we have used the relation \eqref{eq:nablaQopEq2}
and the irreducible decomposition of $ G_{A}{}^{CA'B'}
\Phi_{BCA'B'}$. From here we can deduce an expression for
$\vartheta\Lambda$.

\usemedskip
We summarise the discussion of this section in the following:

\begin{theorem}
The modified variation of the curvature spinors is given by 
\begin{align*}
\vartheta\Psi_{ABCD}={}&- \tfrac{1}{4} G \Psi_{ABCD}
 -  \nabla_{(A}{}^{A'}\Qop_{B|A'|CD)}
 + \tfrac{1}{2} G_{(AB}{}^{A'B'}\Phi_{CD)A'B'},\\
\vartheta\Phi_{ABA'B'}={}&G_{ABA'B'} \Lambda
 -  \tfrac{1}{4} G \Phi_{ABA'B'}
 + \tfrac{1}{2} G^{CD}{}_{A'B'} \Psi_{ABCD}
 -  \nabla^{C}{}_{(A'}\Qop_{|C|B')AB},\\
\vartheta\Lambda={}&- \tfrac{1}{4} G \Lambda
 + \tfrac{1}{12} G^{BCA'B'} \Phi_{BCA'B'}
 -  \tfrac{1}{6} \nabla_{CA'}\Qop^{BA'}{}_{B}{}^{C}.
\end{align*}
\end{theorem}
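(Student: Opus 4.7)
The strategy I would follow is to take, in turn, each of the three standard identities that express a curvature spinor as a second covariant derivative of an arbitrary test spinor $\kappa_A$, namely
\[
\Psi_{ABCD}\kappa^{D} = -\nabla_{(A}{}^{A'}\nabla_{B|A'|}\kappa_{C)}, \qquad
\Phi_{BAA'B'}\kappa^{A} = -\nabla^{A}{}_{(A'}\nabla_{|A|B')}\kappa_{B}, \qquad
3\Lambda\kappa_{A} = \nabla_{(A}{}^{A'}\nabla_{B)A'}\kappa^{B},
\]
apply the modified variation $\vartheta$ to both sides, and extract the desired formulas from the fact that the resulting equations hold for all $\kappa_A$.

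On the right-hand sides, the Leibnitz rule produces both a $\vartheta(\text{curvature})\,\kappa$ piece and a $\text{curvature}\cdot\vartheta\kappa$ piece; the latter will cancel against a matching contribution arising on the left after the inner $\nabla\nabla\vartheta\kappa$ is re-expressed via the original curvature identity. On the left-hand side I would push $\vartheta$ inward through both covariant derivatives using the formula for $\vartheta(\nabla_{AA'}\kappa_B)$ established above. At each derivative this produces three types of contribution: a commuted $\nabla\vartheta\kappa$, a $\Qop$-correction, and a $\delta g\cdot\nabla\kappa$ term. The two $\tfrac14\delta\epsilon^{Q}{}_{Q}$ factors packaged into $\vartheta$ combine with the one already present on the right to yield the overall $-\tfrac14 G$-trace coefficient of each curvature spinor. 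The $\Qop$-corrections, after one derivative is carried outward and the appropriate index symmetrisation is performed, assemble into the single $\nabla\Qop$ term displayed in each identity, while the $\delta g\cdot\nabla\nabla\kappa$ remainders are converted, through commutators of covariant derivatives, into the $G\cdot\Psi$, $G\cdot\Phi$ and $G\cdot\Lambda$ contributions of the stated formulas.

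For $\vartheta\Psi_{ABCD}$ and $\vartheta\Phi_{ABA'B'}$ this procedure closes cleanly after a single round of commutation, and the arbitrariness of $\kappa$ then allows the test spinor to be stripped. The main obstacle lies in $\vartheta\Lambda$: when $\vartheta$ is pushed through the trace $\nabla_{(A}{}^{A'}\nabla_{B)A'}\kappa^{B}$ the naive expansion produces a long list of $\Qop$- and $G\cdot\Phi$-terms with derivatives distributed in several inequivalent ways, and the cancellations needed to collapse them into the compact form $-\tfrac{1}{6}\nabla_{CA'}\Qop^{BA'}{}_{B}{}^{C}+\tfrac{1}{12}G^{BCA'B'}\Phi_{BCA'B'}$ are not immediate. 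To drive this reduction I would invoke the Bianchi-like identity \eqref{eq:nablaQopEq2} for $\nabla\Qop$, together with the irreducible decomposition of $G_{A}{}^{CA'B'}\Phi_{BCA'B'}$ into its $\epsilon_{AB}$-trace and trace-free parts. Once that bookkeeping closes, stripping $\kappa_A$ yields the stated expression and finishes the proof.
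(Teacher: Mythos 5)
Your proposal follows the paper's proof essentially verbatim: vary the three commutator identities on an arbitrary test spinor $\kappa_A$, push $\vartheta$ through using the formula for $\vartheta(\nabla_{AA'}\kappa_B)$, commute derivatives to convert the $\delta g\cdot\nabla\nabla\kappa$ remainders into $G\cdot{}$curvature terms, invoke \eqref{eq:nablaQopEq2} together with the irreducible decomposition of $G_{A}{}^{CA'B'}\Phi_{BCA'B'}$ to close the $\Lambda$ case, and strip $\kappa$. One small correction to your narrative: the $-\tfrac{1}{4}G$ coefficients do not arise from the $\tfrac{1}{4}\delta\epsilon^{Q}{}_{Q}$ factors (those are absorbed by $\vartheta$ and drop out entirely, consistent with $\vartheta\epsilon_{AB}=0$), but from the trace part of the two $-\tfrac{1}{2}\delta g_{ACA'B'}\nabla^{CB'}\kappa_{B}$ corrections, since $G$ is the trace of $\delta g$, not of $\delta\epsilon$.
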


\begin{remark}
For a pure gauge transformation \eqref{eq:puregauge}, we get after a
lengthy but straightforward calculation using commutators, that 
\begin{align*}
\vartheta(\Lambda)={}&(\mathcal{L}_{\xi}\Lambda),\\
\vartheta(\Phi_{AB}{}^{A'B'})={}&(\mathcal{L}_{\xi}\Phi)_{AB}{}^{A'B'}
 -  \Phi^{C}{}_{(A}{}^{C'(A'}\nabla_{B)}{}^{B')}\xi_{CC'}
 -  \Phi^{C}{}_{(A}{}^{C'(A'}\nabla_{|CC'|}\xi_{B)}{}^{B')},\\
\vartheta(\Psi_{ABCD})={}&(\mathcal{L}_{\xi}\Psi)_{ABCD}
- \Psi_{ABCD} \nabla^{FA'}\xi_{FA'},
\end{align*}
where \footnote{The primed indices are moved up after the Lie derivative is taken to allow the symmetrizations to be written nicely.}
\begin{align*}
(\mathcal{L}_{\xi}\Phi)_{AB}{}^{A'B'}\equiv{}&\xi^{CC'} \nabla_{CC'}\Phi_{AB}{}^{A'B'}
 + 2 \Phi^{C}{}_{(A}{}^{C'(A'}\nabla_{B)}{}^{B')}\xi_{CC'},\\
(\mathcal{L}_{\xi}\Psi)_{ABCD}\equiv{}&\xi^{FA'} \nabla_{FA'}\Psi_{ABCD}
 + 2 \Psi_{(ABC}{}^{F}\nabla_{D)}{}^{A'}\xi_{FA'}.
\end{align*}
In this last calculation we have used the Bianchi identity in the form
\begin{align*}
\nabla^{D}{}_{A'}\Psi_{ABCD}={}&\nabla_{(A}{}^{B'}\Phi_{B)CA'B'}
 +  \epsilon_{C(A}\nabla_{B)A'}\Lambda.
\end{align*}
\end{remark}

\section{Variations of space-spinor expressions}\label{sec:spacespinors}

The analysis of Sections \ref{Section:SLCalculus},
\ref{Section:CovDevs} and \ref{Section:Curvature} can be adapted to
consider variations of spinorial fields in a space-spinor
formalism. This formalism can be used to analyse variational problems
in 3-dimensional Riemannian manifolds.

\subsection{Basic formalism}
In what follows, let $(\mathcal{S},h_{ij})$ denote a 3-dimensional
Riemannian manifold with negative-definite metric. On
$(\mathcal{S},h_{ij})$ we assume the existence of a spinor structure
with an antisymmetric spinor $\epsilon_{AB} $. In addition, we assume
that the spinor structure is endowed with an Hermitian product. It
follows from this assumption that there exists an Hermitian spinor
$\varpi_{AA'}$ such given two spinors $\xi_A$ and $\eta_B$ the
Hermitian inner product can be expressed as
\[
\xi_A \hat{\eta}^A \equiv \varpi_{AA'} \bar{\eta}^{A'} \xi^A.
\]
The spinor $\hat{\eta}^A$ defined by the above relation is called the
Hermitian conjugate of $\eta^A$.

Let $e_{\bf k}{}^l$, $\omega^{\bf k}{}_l$ denote, respectively, an
orthonormal frame and coframe of $(\mathcal{S},h_{ij})$ and let
$\epsilon^{\bf A}{}_B$ denote a normalised spin dyad such that the
components of $\epsilon_{AB}$ and $\varpi_{AA'}$ are given,
respectively, by
\[
\epsilon_{\bf AB} = \left(  
\begin{array}{cc}
0 & 1 \\
-1 & 0
\end{array}
\right),\qquad
\varpi_{\bf AA'} = \left(  
\begin{array}{cc}
1 & 0 \\
0 & 1
\end{array}
\right).
\]
The transformations of the spin dyad respecting the above expressions
is given by $SU(2,\mathbb{C})$ matrices $O_{\bf A}{}^{\bf B}$.

The correspondence between spatial tensors and spinors is realised by the \emph{spatial Infeld-van der Waerden
symbols} $\sigma_{\bf k}{}^{\bf AB}$ and $\sigma^{\bf k}{}_{\bf AB}$. Given an arbitrary $ v^k \in T \mathcal{S}$
and $\beta_k \in T^* \mathcal{S}$ one has that
\[
v^{\bf k} \mapsto v^{\bf AB} = v^{\bf k} \sigma_{\bf k}{}^{\bf AB},
\qquad \beta_{\bf k} \mapsto \beta_{\bf AB} =\beta_{\bf k}\sigma^{\bf k}{}_{\bf AB}, 
\]
where  
\[
v^{\bf k} \equiv v^k \omega^{\bf k}{}_k, \qquad \beta_{\bf k} \equiv
\beta_k e_{\bf k}{}^k.
\]
 In more explicit terms, the
correspondence is 
\[
(v^{\bf 1}, v^{\bf 2}, v^{\bf 3}) \mapsto
\frac{1}{\sqrt{2}}
\left(\begin{array}{cc}
- v^{\bf 1} - \mbox{i} v^{\bf 2} & v^{\bf 3}\\
v^{\bf 3} & v^{\bf 1} - \mbox{i} v^{\bf 2}
\end{array}\right)
, \qquad
(\beta_{\bf 1}, \beta_{\bf 2}, \beta_{\bf 3})\mapsto
\frac{1}{\sqrt{2}}
\left(\begin{array}{cc}
- \beta_{\bf 1} + \mbox{i} \beta_{\bf 2} & \beta_{\bf 3}\\
\beta_{\bf 3} & \beta_{\bf 1} + \mbox{i} \beta_{\bf 2}
\end{array}\right).
\]

From these, we define the spatial soldering form to be
\begin{subequations}
\begin{align}
\sigma_{k}{}^{AB}\equiv{}&\omega^{\bf l}{}_{k} \epsilon_{{\bf C}}{}^{A} \epsilon_{{\bf D}}{}^{B} 
\sigma_{{\bf l}}{}^{{\bf C} {\bf D}},\label{eq:SSsigmadef}\\
\sigma^{k}{}_{AB}\equiv{}&\epsilon^{\bf C}{}_{A} \epsilon^{\bf D}{}_{B} e_{{\bf l}}{}^{k} \sigma^{{\bf l}}{}_{{\bf C} {\bf D}}.
\end{align}
\end{subequations}
As we allow the spinor and tensor frames to be independent, the
soldering form will therefore be frame dependent.  However, we will
always have the universal relations
\begin{subequations}
\begin{align}
\sigma_{k}{}^{CD} \sigma^{l}{}_{CD}={}&\delta_{k}{}^{l},\label{eq:SSsigmatodelta}\\
h_{kl}={}&\sigma_{k}{}^{AB} \sigma_{l}{}^{CD} \epsilon_{CA} \epsilon_{DB}.
\end{align}
\end{subequations}

The Hermitian conjugate of 
\[
\phi_{A}= \phi_{\bf 0}\epsilon^{\bf 0}{}_{A}
 +  \phi_{\bf 1}\epsilon^{\bf 1}{}_{A}
\]
is given by
\[
\hat{\phi}_{A}= -  \bar{\phi}_{\bf 1'}\epsilon^{\bf 0}{}_{A}
+  \bar{\phi}_{\bf 0'}\epsilon^{\bf 1}{}_{A}.
\]
It clearly follows that 
\[
\hat{\hat{\phi}}_{A}=-\phi_{A}.
\]
The Hermitian conjugation can be extended to higher valence space
spinors by requiring that the conjugate of a product equals the
product of conjugates. We also get
\[
\hat{\hat{\mu}}_{A_1\cdots A_k} = (-1)^k \hat{\mu}_{A_1\cdots A_k}.
\]
Furthermore, it is important to note
\begin{equation}
\hat{\epsilon}_{AB}=\epsilon_{AB},\qquad 
\hat{\sigma}^{a}{}_{AB}=- \sigma^{a}{}_{AB}.\label{eq:epsilonsigmaHermitian}
\end{equation}

\subsection{Basic variational formulae}
As in the case of standard spacetime spinors, we can compute the
variations of the frames and the inverse metrics from the relations
\begin{align*}
\delta(e_{{\bf i}}{}^{l}) ={}& - e_{{\bf i}}{}^{j} e_{{\bf k}}{}^{l} \delta \omega^{{\bf k}}{}_{j},\\
\delta(\epsilon_{{\bf A}}{}^{D}) ={}& - \epsilon_{{\bf A}}{}^{B} \epsilon_{{\bf C}}{}^{D} \delta \epsilon^{{\bf C}}{}_{B},\\
\delta(h^{kl}) ={}& - (\delta h_{ij}) h^{ik} h^{jl},\\
\delta(\epsilon^{CD}) ={}& - \delta \epsilon_{AB} \epsilon^{AC} \epsilon^{BD}.
\end{align*}
Likewise, from the relation \eqref{eq:SSsigmatodelta} we get
\[
\delta(\sigma^{l}{}_{AB}) = - \sigma^{k}{}_{AB} \sigma^{l}{}_{CD} \delta \sigma_{k}{}^{CD}.
\]

We can also split the variation of the coframes in terms of the
variation of the metric and spin metric and gauge pieces
\begin{align*}
\delta \omega^{{\bf m}}{}_{a} ={}& - e_{{\bf h}}{}^{b}  h^{{\bf h} {\bf m}}T_{ab} + \tfrac{1}{2} e_{{\bf h}}{}^{b} h^{{\bf h} {\bf m}} \delta h_{ab},\\
\delta \epsilon^{{\bf P}}{}_{A} ={}& - \epsilon_{{\bf H}}{}^{B}  \epsilon^{{\bf H} {\bf P}}S_{AB} -  \tfrac{1}{2} \epsilon_{{\bf H}}{}^{B} \epsilon^{{\bf H} {\bf P}} \delta \epsilon_{AB} ,
\end{align*}
where the tensor and spinor frame gauge fields are
\begin{align*}
T_{ab} \equiv{}& h_{{\bf c} {\bf d}}\omega^{\bf d}{}_{[a}  \delta \omega^{{\bf c}}{}_{b]},&
S_{AB} \equiv{}& \omega^{\bf D}{}_{(A} \delta \omega^{{\bf C}}{}_{B)} \epsilon_{{\bf C} {\bf D}}.
\end{align*}
A calculation following the same principles as for the spacetime
version starting from the relation \eqref{eq:SSsigmadef} gives the variation of the
spatial soldering form:
\begin{align*}
\delta \sigma_{k}{}^{AB}={}&- T_{k}{}^{l} \sigma_{l}{}^{AB}
 + \tfrac{1}{2} \sigma^{lAB} \delta h_{kl}
 - 2 \sigma_{k}{}^{(A|C|}S^{B)}{}_{C}
 + \sigma_{k}{}^{(A|C|}\delta \epsilon^{B)}{}_{C}.
\end{align*}
The irreducible parts are given by
\begin{align*}
\sigma^{k(CD}\delta \sigma_{k}{}^{AB)}={}&\tfrac{1}{2} \delta h^{(ABCD)},\\
\sigma^{k(C}{}_{B}\delta \sigma_{k}{}^{A)B}={}&T^{AC}
 - 2 S^{AC},\\
\sigma^{k}{}_{CD} \delta \sigma_{k}{}^{CD}={}&\tfrac{1}{2} \delta h^{CD}{}_{CD}
 + \tfrac{3}{2} \delta \epsilon^{C}{}_{C},
\end{align*}
where
\begin{align*}
T_{AB}\equiv{}&T_{kl} \sigma^{k}{}_{A}{}^{C} \sigma^{l}{}_{BC},\\
\delta h_{ABCD}\equiv{}&\sigma^{k}{}_{AB} \sigma^{l}{}_{CD} \delta h_{kl}.
\end{align*}

We can now use this to see how the variation of vectors and covectors
in space-spinor and tensor form differ:
\begin{align*}
\sigma_{k}{}^{AB} \delta\zeta^k={}&\delta(\zeta^{AB})
 -  \tfrac{1}{2} \delta \epsilon^{C}{}_{C} \zeta^{AB}
 -  \tfrac{1}{2} \delta h^{AB}{}_{CD} \zeta^{CD}
 + T^{(A|C|}\zeta^{B)}{}_{C}
 - 2 S^{(A|C|}\zeta^{B)}{}_{C},\\
\sigma^{k}{}_{AB} \delta\xi_k={}&\delta(\xi_{AB})
 + \tfrac{1}{2} \delta \epsilon^{C}{}_{C} \xi_{AB}
 + \tfrac{1}{2} \delta h_{ABCD} \xi^{CD}
 + T_{(A}{}^{C}\xi_{B)C}
 - 2 S_{(A}{}^{C}\xi_{B)C},
\end{align*}
where $\zeta^k=\sigma^{k}{}_{CD}\zeta^{CD}$ and
$\xi_k=\sigma_{k}{}^{CD}\zeta_{CD}$.  This leads us to define a
modified variation that cancels the gauge terms and the variation of the
spin metric.

\begin{definition}
For valence 1 space spinors we define the modified variation operator $\vartheta$ via
\begin{align*}
\vartheta(\phi_{A})\equiv{}&\delta(\phi_{A})
 + \tfrac{1}{4} \delta \epsilon^{B}{}_{B} \phi_{A}
 +  \tfrac{1}{2} T_{A}{}^{B} \phi_{B}
 - S_{A}{}^{B} \phi_{B},\\
\vartheta(\phi^{A})\equiv{}&\delta(\phi^{A})
 -  \tfrac{1}{4} \delta \epsilon^{B}{}_{B} \phi^{A}
 -  \tfrac{1}{2} T^{A}{}_{B} \phi^{B}
 + S^{A}{}_{B} \phi^{B}.
\end{align*}
These relations extend to higher valence spinors via the Leibnitz rule.
\end{definition}

In the same way as for the spacetime variations, we get a relation
between $\vartheta$ and spin frame component variation:
\begin{align}
\vartheta \phi_{A}={}&\epsilon^{\bf B}{}_{A} \delta(\phi_{{\bf B}})
 + \tfrac{1}{2} T_{A}{}^{B} \phi_{B}.\label{eq:SSvarthetatocomps}
\end{align}

The reality of $T_{ab}$  and \eqref{eq:epsilonsigmaHermitian} gives
\[
\widehat{T}_{AB}= T_{AB}.
\]
Expanding the frame index in equation \eqref{eq:SSvarthetatocomps} and taking Hermitian conjugate yields
\begin{align*}
\widehat{\vartheta \phi}_{A}={}&\epsilon^{\bf 1}{}_{A} \delta(\bar{\phi}_{{\bf 0'}})
 -  \epsilon^{\bf 0}{}_{A} \delta(\bar{\phi}_{{\bf 1'}})
 + \tfrac{1}{2} T_{A}{}^{B} \hat{\phi}_{B}\nonumber\\
={}&\epsilon^{\bf 0}{}_{A} \delta(\hat{\phi}_{{\bf 0}})
 + \epsilon^{\bf 1}{}_{A} \delta(\hat{\phi}_{{\bf 1}})
 + \tfrac{1}{2} T_{A}{}^{B} \hat{\phi}_{B}\nonumber\\
={}&\vartheta(\hat\phi)_{A}.
\end{align*}
Hence, the operation of Hermitian conjugation and the modified variation $\vartheta$ commute.

\subsection{Variations of the spatial connection}
Let $\mathcal{R}_{ABCD}$ denote the space spinor version of the trace
free Ricci tensor, and let $\mathcal{R}$ be the Ricci scalar.  Define
\begin{align*}
H^{ABCD}\equiv{}&\delta h^{(ABCD)},\\
H\equiv{}&\delta h_{AB}{}^{AB},\\
\Qop_{ABCD}\equiv {}&- \tfrac{1}{2} D_{(C}{}^{F}\delta h_{D)FAB},\\
F^{AB}\equiv{}&- \tfrac{1}{2} D^{AB}\delta h^{CD}{}_{CD}
 + D_{CD}\delta h^{ABCD}.
\end{align*}
Similarly to the case of spacetime spinors, we can compute the variation of a covariant derivative.
\begin{theorem}
The variation of a covariant space-spinor derivative is given by
\begin{align*}
\vartheta(D_{AB}\kappa_{C})={}&
 D_{AB}\vartheta \kappa_{C}
 +\Qop_{ABCD} \kappa^{D}
 -  \tfrac{1}{2} \delta h_{ABDF} D^{DF}\kappa_{C}.
\end{align*}
\end{theorem}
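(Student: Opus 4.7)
The plan is to mirror the proof of the analogous spacetime theorem (variation of $\nabla_{AA'}\kappa_B$) step by step. The absence of primed indices in the space-spinor setting and the use of Hermitian conjugation in place of complex conjugation actually simplify the bookkeeping: only one frame-gauge spinor $T_{AB}$ and one dyad-gauge spinor $S_{AB}$ appear, rather than two of each.

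First I would introduce the spatial frame connection $\gamma_k{}^l{}_m \equiv -e_{{\bf m}}{}^l D_k \omega^{{\bf m}}{}_m$ and its spinor counterpart $\gamma_{AB}{}^D{}_C \equiv -\epsilon_{{\bf D}}{}^D D_{AB}\epsilon^{{\bf D}}{}_C$, in direct analogy with the objects defined around \eqref{eq:metriccompgamma} and \eqref{GammaTensorToGammaSpinor}. Applying the same three inputs as in the spacetime case --- the assumption $D_k\delta f = \delta(D_k f)$ on scalars, metric compatibility, and torsion-freeness (the latter deduced by setting $\xi_k=D_k f$ in $\delta(D_k\xi_l)$ and antisymmetrising in $k,l$) --- one obtains an expression for $\delta\gamma_{klm}$ purely in terms of $\delta h_{kl}$, $T_{kl}$ and their first derivatives. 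The same algebra that led to \eqref{eq:varcovd1} then yields the tensorial identity $\delta(D_k\xi_l) = D_k(\delta\xi_l) - Q_l{}^m{}_k \xi_m$, with $Q_k{}^l{}_m \equiv \tfrac{1}{2} h^{ln}(D_k\delta h_{nm}+D_m\delta h_{kn}-D_n\delta h_{km})$. Projecting into spinor form through the spatial soldering form \eqref{eq:SSsigmadef} produces the 3-dimensional analogue of \eqref{VariationGammaSpinor} for $\delta\gamma_{AB}{}^D{}_C$.

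Next I would expand $\kappa_C = \kappa_{{\bf C}}\epsilon^{{\bf C}}{}_C$, so that $D_{AB}\kappa_C = \epsilon^{{\bf C}}{}_C D_{AB}\kappa_{{\bf C}} - \gamma_{AB}{}^D{}_C \kappa_D$, apply $\delta$, substitute the expressions for $\delta\epsilon^{{\bf C}}{}_C$ and $\delta\gamma_{AB}{}^D{}_C$ obtained above, and then rewrite every $\delta$ of a spinor as $\vartheta$ using the definition of the modified variation. The $T_{AB}$, $S_{AB}$ and $\delta\epsilon^B{}_B$ contributions should then cancel between the three sources (the variation of $\gamma$, the variation of the dyad in the expansion of $\kappa_C$, and the passage $\delta\to\vartheta$ on both sides), leaving only the totally symmetric combination $\Qop_{ABCD}\kappa^D$ together with the residual $-\tfrac{1}{2}\delta h_{ABDF}D^{DF}\kappa_C$ coming from the $\delta\sigma^k{}_{AB}$ piece of the soldering-form variation, exactly as in equation \eqref{eq:ThmVarDerSpin}.

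The main obstacle is precisely this gauge-term bookkeeping: one must verify that \emph{every} $T$, $S$ and $\delta\epsilon^B{}_B$ contribution cancels, and that the remaining part of $Q$ decomposes into the totally symmetric $\Qop_{ABCD}$ (via the definition given just before the theorem) plus the $\delta h$ trace piece. This is the reason the modified variation operator was introduced, but the cancellation has to be checked explicitly; as in the spacetime case, it is a mechanical but tedious symbolic computation that xAct handles efficiently, and no genuinely new geometric input is required beyond the tensorial identity for $\delta(D_k \xi_l)$ and the irreducible decomposition of $Q_k{}^l{}_m$ in spinor form.
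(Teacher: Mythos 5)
Your proposal is correct and follows essentially the same route as the paper, which establishes this theorem only by appeal to the spacetime argument of Section~\ref{Section:CovDevs} that you reproduce step by step (spatial frame connection, transition tensor $Q_k{}^l{}_m$, projection with the spatial soldering form, and cancellation of the $T$, $S$ and $\delta\epsilon^B{}_B$ gauge terms via $\vartheta$). One minor terminological slip: $\Qop_{ABCD}=-\tfrac{1}{2}D_{(C}{}^F\delta h_{D)FAB}$ is symmetrised only over $CD$ and has a nonvanishing trace $\Qop_A{}^C{}_{BC}$, so it is not totally symmetric as you describe it, although this does not affect the structure of the argument.
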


We also get
\begin{align*}
\Qop_{A}{}^{C}{}_{BC}={}&- \tfrac{1}{6} D_{AB}H
 + \tfrac{1}{4} D_{CD}H_{AB}{}^{CD},\\
F_{AB}={}&- \tfrac{1}{6} D_{AB}H
 + D_{CD}H_{AB}{}^{CD},\\
 D_{DF}H_{ABC}{}^{F}={}&2 \Qop_{(ABC)D}
 + 2 \epsilon_{D(A}\Qop_{B}{}^{F}{}_{C)F}
 +  \tfrac{1}{2} \epsilon_{D(A}D_{BC)}H.
\end{align*}

\subsection{Diffeomorphism dependence}
To analyse the dependence of the formalism on diffeomorphisms, we
proceed in the same way as in
Section~\ref{sec:diffeomorphisms}. Accordingly, let $\phi_\lambda$ be
a one parameter group of diffeomorphisms generated by a vector field
$\xi^a$. Now, let $h_{ab}[\lambda]=\phi^*_{-\lambda}\mathring
h_{ab}$. All members of the family $h_{ab}[\lambda]$ will have 
the same geometric content and we get
\begin{align}
\delta h_{ab}={}&\mathcal{L}_\xi h_{ab}
=2 D_{(a}\xi_{b)}.\label{eq:spatialpuregauge}
\end{align}
Moreover, one has that 
\begin{align*}
\Qop_{ABCD}={}&- \tfrac{1}{2} D_{(C}{}^{F}D_{D)F}\xi_{AB}
 -  \tfrac{1}{2} D_{(C}{}^{F}D_{|AB|}\xi_{D)F},\\
F^{AB}={}&D_{CD}D^{CD}\xi^{AB}
- \tfrac{1}{3} \mathcal{R} \xi^{AB}
 -  \mathcal{R}^{AB}{}_{CD} \xi^{CD}.
\end{align*}
Again, we see that $F^{AB}$ can be interpreted as a gauge source
function for the linearised diffeomorphisms, but this time one needs
to solve an elliptic equation instead of a wave equation to obtain
$\xi^{AB}$ from $F^{AB}$.

\subsection{Variations of the spatial curvature}
By computing the variation of the commutator relations
\begin{subequations}
\begin{align}
\mathcal{R} \kappa_{A}={}&8 D_{(A}{}^{C}D_{B)C}\kappa^{B},\label{eq:SScommutatorR0}\\
\mathcal{R}_{ABCD} \kappa^{D}={}&2 D_{(A}{}^{D}D_{B|D|}\kappa_{C)},\label{eq:SScommutatorR4}
\end{align}
\end{subequations}
we get, after calculations similar to those carried out in the spacetime case, the variation of the curvature.
\begin{theorem}
The variation of the spatial curvature spinors is given by
\begin{align*}
\vartheta(\mathcal{R})={}&- \tfrac{1}{3} H \mathcal{R}
 -  H^{BCDF} \mathcal{R}_{BCDF}
 - 4 D_{CD}\Qop^{BC}{}_{B}{}^{D},\\
\vartheta(\mathcal{R}_{ABCD})={}&- \tfrac{1}{12} H_{ABCD} \mathcal{R}
 -  \tfrac{1}{3} H \mathcal{R}_{ABCD}
 + 2 D_{(A}{}^{F}\Qop_{B|F|CD)}
 + \tfrac{1}{2} H_{(AB}{}^{FH}\mathcal{R}_{CD)FH}.
\end{align*}
\end{theorem}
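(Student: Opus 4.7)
The plan is to mimic the derivation of the spacetime variations $\vartheta\Psi_{ABCD}$, $\vartheta\Phi_{ABA'B'}$, $\vartheta\Lambda$, but applied to the spatial commutators \eqref{eq:SScommutatorR0} and \eqref{eq:SScommutatorR4}. I would apply the modified variation operator $\vartheta$ to both sides of each commutator identity, noting that the left-hand sides produce, by Leibnitz, $\vartheta(\mathcal{R})\kappa_{A}+\mathcal{R}\vartheta\kappa_{A}$ and $\vartheta(\mathcal{R}_{ABCD})\kappa^{D}+\mathcal{R}_{ABCD}\vartheta\kappa^{D}$ respectively. On the right-hand sides I would iterate the previously established formula
\[
\vartheta(D_{AB}\kappa_{C})=D_{AB}\vartheta\kappa_{C}+\Qop_{ABCD}\kappa^{D}-\tfrac{1}{2}\delta h_{ABDF}D^{DF}\kappa_{C}
\]
twice, once for the outer derivative and once for the inner one, being careful to treat $\delta h_{ABDF}$ with its full (not just symmetric) index structure.

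After this iteration the right-hand sides split into four types of contributions: (i) terms of the form $D\,D\vartheta\kappa$, which recombine via the same commutator identity into $\mathcal{R}\vartheta\kappa_{A}$ or $\mathcal{R}_{ABCD}\vartheta\kappa^{D}$ and thus cancel the corresponding pieces on the left; (ii) terms quadratic in $\delta h$ times $D^{2}\kappa$, which can be put back through the undifferentiated commutators to produce the $H\mathcal{R}$ and $H_{ABCD}\mathcal{R}$ contributions (here one uses the decomposition $\delta h_{ABCD}=H_{ABCD}+$ trace parts); (iii) terms of the form $(D\,\delta h)\,D\kappa$, which after using the identity $D_{DF}H_{ABC}{}^{F}=2\Qop_{(ABC)D}+2\epsilon_{D(A}\Qop_{B}{}^{F}{}_{C)F}+\tfrac{1}{2}\epsilon_{D(A}D_{BC)}H$ convert cleanly into $\Qop\,D\kappa$ terms that can again be absorbed via the commutators into curvature times $\kappa$; and (iv) terms $D\Qop\,\kappa$, which are precisely those producing the surviving $D_{CD}\Qop$ contribution in the statement. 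Once every $D\kappa$ and $D^{2}\kappa$ has been eliminated in favour of curvature-times-$\kappa$ terms, what remains is an equation of the form $\vartheta(\mathcal{R})\kappa_{A}=(\cdots)\kappa_{A}$ and $\vartheta(\mathcal{R}_{ABCD})\kappa^{D}=(\cdots)\kappa^{D}$, and since $\kappa^{A}$ is arbitrary I would strip it off. For the Weyl-like case I would additionally symmetrise over $(ABCD)$ at the end, exploiting the algebraic symmetries of $\mathcal{R}_{ABCD}$ to absorb any residual trace contributions.

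The main obstacle is bookkeeping rather than conceptual. The iteration generates a sizeable collection of $\Qop$, $H_{ABCD}$, $H$, and $D\kappa$ terms, many of which are not individually in irreducible form; the nontrivial simplification is recognising that the curvature-free remainder assembles into the combinations $D_{CD}\Qop^{BC}{}_{B}{}^{D}$ and $D_{(A}{}^{F}\Qop_{B|F|CD)}$ after repeated use of the $\Qop$--$H$ identities and commutator relations. A secondary subtlety is that in the trace-free equation for $\vartheta(\mathcal{R}_{ABCD})$, the symmetrisation kills several potential $\epsilon$-trace terms produced by the decomposition of $\Qop_{ABCD}$, and one must verify that the antisymmetric pieces of $\delta h_{ABDF}$ contribute only to such symmetrisable combinations, so the final expression depends only on $H_{ABCD}$ and $H$ and not on the full $\delta h$. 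Once these manipulations are organised, the two stated formulas read off directly.
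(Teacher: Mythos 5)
Your proposal follows essentially the same route as the paper: apply $\vartheta$ to the commutator identities \eqref{eq:SScommutatorR0} and \eqref{eq:SScommutatorR4}, iterate the formula for $\vartheta(D_{AB}\kappa_{C})$, recombine the $DD\vartheta\kappa$ and $\delta h\,D^{2}\kappa$ terms through the commutators and the $\Qop$--$H$ identities, and strip off the arbitrary $\kappa$. The only minor imprecision is your worry about ``antisymmetric pieces of $\delta h_{ABDF}$'': since $\delta h_{kl}$ is symmetric, $\delta h_{ABCD}$ has no such pieces --- its non-totally-symmetric content is purely the trace part proportional to $H$ --- but this does not affect the argument.
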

\begin{proof}
Computing the variation of relation \eqref{eq:SScommutatorR0} gives
\begin{align*}
\mathcal{R} \vartheta \kappa_{A} + \vartheta(\mathcal{R}) \kappa_{A}={}&-4 D_{AC}D_{B}{}^{C}\vartheta \kappa^{B}
 + \tfrac{4}{3} H D_{AC}D_{B}{}^{C}\kappa^{B}
 + 4 \Qop_{B}{}^{D}{}_{CD} D_{A}{}^{C}\kappa^{B}
 + 4 \kappa^{B} D_{AD}\Qop^{CD}{}_{BC}\nonumber\\
& - 2 H_{BCDF} D_{A}{}^{F}D^{CD}\kappa^{B}
 -  \tfrac{2}{3} D_{A}{}^{B}H D_{BC}\kappa^{C}
 - 4 D_{BC}D_{A}{}^{C}\vartheta \kappa^{B}\nonumber\\
& + \tfrac{4}{3} H D_{BC}D_{A}{}^{C}\kappa^{B}
 + 4 \Qop_{A}{}^{D}{}_{CD} D_{B}{}^{C}\kappa^{B}
 - 2 H_{ACDF} D_{B}{}^{F}D^{CD}\kappa^{B}\nonumber\\
& + \tfrac{2}{3} D_{AC}\kappa_{B} D^{BC}H
 - 4 \kappa^{B} D_{CD}\Qop_{A}{}^{C}{}_{B}{}^{D}
 - 4 \Qop_{ACBD} D^{CD}\kappa^{B}
 - 4 \Qop_{BCAD} D^{CD}\kappa^{B}\nonumber\\
& + 2 D_{AF}H_{BCD}{}^{F} D^{CD}\kappa^{B}
 + 2 D_{BF}H_{ACD}{}^{F} D^{CD}\kappa^{B}
 + 2 H_{BCDF} D^{DF}D_{A}{}^{C}\kappa^{B}\nonumber\\
& + 2 H_{ACDF} D^{DF}D_{B}{}^{C}\kappa^{B}\\
={}&\mathcal{R} \vartheta \kappa_{A}
 -  \tfrac{1}{3} H \mathcal{R} \kappa_{A}
 - 2 H_{A}{}^{CDF} \mathcal{R}_{BCDF} \kappa^{B}
 + 4 \kappa^{B} D_{AD}\Qop^{CD}{}_{BC}\nonumber\\
& - 4 \kappa^{B} D_{CD}\Qop_{A}{}^{C}{}_{B}{}^{D}\\
={}&\mathcal{R} \vartheta \kappa_{A}
 -  \tfrac{1}{3} H \mathcal{R} \kappa_{A}
 -  H^{BCDF} \mathcal{R}_{BCDF} \kappa_{A}
 - 4 \kappa_{A} D_{CD}\Qop^{BC}{}_{B}{}^{D}.
\end{align*}
Computing the variation of relation \eqref{eq:SScommutatorR4} gives
\begin{align*}
\mathcal{R}_{ABCD} \vartheta \kappa^{D} + \vartheta(\mathcal{R}_{ABCD}) \kappa^{D}={}&2 D_{(A}{}^{D}D_{B|D|}\vartheta \kappa_{C)}
 -  \tfrac{2}{3} H D_{(A}{}^{D}D_{B|D|}\kappa_{C)}\nonumber\\
& + H_{(A}{}^{DFH}D_{B|D}D_{FH|}\kappa_{C)}
 -  H_{(A}{}^{DFH}D_{|DF|}D_{B|H|}\kappa_{C)}\nonumber\\
& - 2 \Qop_{(A}{}^{D}{}_{B}{}^{F}D_{|DF|}\kappa_{C)}
 - 2 \Qop_{(A}{}^{D}{}_{|D|}{}^{F}D_{B|F|}\kappa_{C)}\nonumber\\
& + 2 \kappa^{D}D_{(A}{}^{F}\Qop_{B|F|C)D}
 -  \tfrac{1}{3} D_{(A}{}^{D}HD_{B|D|}\kappa_{C)}\nonumber\\
& -  D_{(A}{}^{D}H_{B|D}{}^{FH}D_{FH|}\kappa_{C)}\\
={}&\mathcal{R}_{ABCD} \vartheta \kappa^{D}
 -  \tfrac{1}{12} H_{ABCD} \mathcal{R} \kappa^{D}
 -  \tfrac{1}{3} H \mathcal{R}_{ABCD} \kappa^{D}\nonumber\\
& + 2 \kappa^{D} D_{(A}{}^{F}\Qop_{B|F|C)D}
 + \tfrac{1}{2} \kappa^{D} H_{(AB}{}^{FH}\mathcal{R}_{C)DFH}.
\end{align*}
\end{proof}

\begin{remark}
For a pure gauge transformation \eqref{eq:spatialpuregauge}, we get
\begin{align*}
\vartheta(\mathcal{R})={}&\mathcal{L}_{\xi}\mathcal{R},\\
\vartheta(\mathcal{R}_{ABCD})={}&\mathcal{L}_{\xi}\mathcal{R}_{ABCD}
 -  \mathcal{R}_{(AB}{}^{FH}D_{CD)}\xi_{FH}
 -  \mathcal{R}_{(AB}{}^{FH}D_{|FH|}\xi_{CD)},
\end{align*}
where
\begin{align*}
\mathcal{L}_{\xi}\mathcal{R}_{ABCD}={}&\xi^{FH} D_{FH}\mathcal{R}_{ABCD}
 + 2 \mathcal{R}_{(AB}{}^{FH}D_{CD)}\xi_{FH}.
\end{align*}
\end{remark}

\section*{Acknowledgements}
We thank L. B. Szabados for helpful conversations at the beginning of this project. We also thank L. Andersson and S. Aksteiner for discussions regarding gauge dependence.
TB was supported by the Engineering and Physical Sciences Research Council [grant number EP/J011142/1].

\appendix

\section{Rotations}
The purpose of this appendix is to discuss some issues related to the
gauge in the frame and spin dyad formalisms.

\subsection{Lorentz transformations}
\label{Section:LorentzTransformations}
As it is well known, the metric $g_{ab}$ is not determined in a unique
way by the orthonormal coframe $\omega^{\bf a}{}_a$. Any
other coframe related to $\omega^{\bf a}{}_a$ by means of a
Lorentz transformation ---i.e. a matrix $( \Lambda^{\bf a}{}_{\bf b})$
such that 
\begin{equation}
\eta_{\bf a b} \Lambda^{\bf a}{}_{\bf c} \Lambda^{\bf b}{}_{\bf d}
=\eta_{\bf c d}.
\label{LorentzTransformation}
\end{equation}
It follows that $
\grave{\omega}{}^{\bf a}{}_a \equiv \Lambda^{\bf a}{}_{\bf b}
\omega^{\bf b}{}_a$ is also orthonormal with respect to $g_{ab}$ and one can write
$g_{ab} = \eta_{\bf ab} \grave{\omega}^{\bf a}{}_a \grave{\omega}^{\bf
b}{}_b$. The associated orthonormal frame is $\grave{e}_{\bf
a}{}^a = \Lambda_{\bf a}{}^{\bf b} e_{\bf b}{}^a $ with
$(\Lambda_{\bf a}{}^{\bf b}) \equiv ( \Lambda^{\bf a}{}_{\bf b})^{-1}$
where the last expression is a relation between matrices.

\usemedskip
The discussion in the previous paragraph can be extended to include
spinors. Making use of the Infeld-van der Waerden symbols, equation
\eqref{LorentzTransformation} can be rewritten as
\[
\epsilon_{\bf AB} \epsilon_{\bf A'B'} \Lambda^{\bf AA'}{}_{\bf CC'}
\Lambda^{\bf BB'}{}_{\bf DD'} =\epsilon_{\bf CD} \epsilon_{\bf C'D'},
\]
with $\Lambda^{\bf AA'}{}_{\bf CC'} \equiv   \sigma_{\bf a}{}^{\bf
  AA'} \sigma^{\bf c}{}_{\bf CC'}       \Lambda^{\bf
  a}{}_{\bf c}$. It can be shown that the spinorial components
$\Lambda^{\bf AA'}{}_{\bf CC'}$ can be decomposed as
\[
\Lambda^{\bf AA'}{}_{\bf CC'} = \Lambda^{\bf A}{}_{\bf C}
\bar{\Lambda}^{\bf A'}{}_{\bf C'},
\]
where $(\Lambda^{\bf A}{}_{\bf C} )$ is a $SL(2,\mathbb{C})$
matrix. The latter naturally induces a change of spinorial basis via
the relations 
\[
\grave{\epsilon}_{\bf A}{}^A = \Lambda_{\bf A}{}^{\bf B} \epsilon_{\bf
  B}{}^A, \qquad \grave{\epsilon}^{\bf A}{}_A = \Lambda^{\bf A}{}_{\bf
  B} \epsilon^{\bf B}{}_A,
\]
with $(\Lambda_{\bf A}{}^{\bf B}) \equiv (\Lambda^{\bf A}{}_{\bf
  B})^{-1}$. Crucially, one has that
\[
\epsilon_{\bf AB} = \Lambda^{\bf C}{}_{\bf A} \Lambda^{\bf D}{}_{\bf
  B} \epsilon_{\bf CD}, \qquad \epsilon^{\bf AB} = \Lambda_{\bf
  C}{}^{\bf A} \Lambda_{\bf D}{}^{\bf B} \epsilon^{\bf CD}. 
\]

\subsection{$O(3)$-rotations}
Given a 3-dimensional negative-definite Riemannian metric $h_{ij}$ and
an associated orthonormal coframe $\omega^{\bf i}{}_k$ one has that
\[
h_{ij} =-\delta_{\bf i j} \omega^{\bf i}{}_i \omega^{\bf j}{}_j.
\]
Any other coframe $\grave{\omega}^{\bf i}{}_k$ related to the coframe
$\omega^{\bf i}{}_k$ through the relation $\grave{\omega}^{\bf
  i}{}_k = O_{\bf j}{}^{\bf i} \omega^{\bf j}{}_k$, where $(O_{\bf j}{}^{\bf
  i})$ is a $O(3)$-matrix, gives rise to the same metric. The
defining condition for $(O^{\bf
  i}{}_{\bf j})$ can be expressed as
\[
\delta_{\bf i j} = \delta_{\bf k l} O_{\bf i}{}^{\bf k}{} O_{\bf j}{}^{\bf
  l}.
\]

A direct calculation using the definition of the Hermitian product
shows that the changes of spin dyad preserving the Hermitian structure
induced by the Hermitian spinor $\varpi_{AA'}$ are of the form
$\grave{\epsilon}_{\bf A}{}^A = O_{\bf A}{}^{\bf B} \epsilon_{\bf
  B}{}^A$ where $(O_{\bf A}{}^{\bf B})$ are $SU(2,\mathbb{C})$
matrices. As $SU(2,\mathbb{C})$ is a subgroup of $SL(2,\mathbb{C})$,
one has that $\epsilon_{\bf AB} = O_{\bf A}{}^{\bf C} O_{\bf B}{}^{\bf
  D}\epsilon_{\bf CD}$. The matrices $( O_{\bf j}{}^{\bf i})$ and
$(O_{\bf A}{}^{\bf B})$ are related to each other via the spatial
Infeld-van der Waerden symbols:
\[
O_{\bf i}{}^{\bf j} = \sigma_{\bf i}{}^{\bf AB} \sigma^{\bf j}{}_{\bf
  CD}  O_{\bf A}{}^{\bf C} O_{\bf B}{}^{\bf D}.
\]

\end{document}